\newcommand{\dps}[1]  {\displaystyle{#1} }
\def\R{\mathbb{R}}
\def\T{\mathbb{T}}
\def\Z{\mathbb{Z}}
\def\D{\mathcal{D}}
\def\A{\mathcal{A}}
\def\B{\mathcal{B}}
\def\L{\mathcal{L}}
\def\e{{\rm e}}
\newcommand{\ri}{\mathrm{i}}
\renewcommand{\ker}{\mathrm{Ker}}
\def\div{{\rm div}\, }
\newtheorem{rem}{Remark}
\newtheorem{theo}{Theorem}
\newtheorem{propo}{Proposition}
\newtheorem{lemm}{Lemma}
\newtheorem{corr}{Corollary}
\newtheorem{ass}{Assumption}
\title{Nonequilibrium shear viscosity computations with Langevin dynamics}
\author{
  R\'emi Joubaud
  \thanks{ANDRA, DRD/EAP, Parc de la croix blanche, 1,7 rue Jean Monnet, 92298 Ch\^{a}tenay-Malabry Cedex, France and Universit\'e Paris Est, CERMICS, Ecole des Ponts ParisTech, 6 \& 8 Av. Pascal, 77455 Marne-la-Vall\'ee, France.} 
  \and 
  Gabriel Stoltz
  \thanks{Universit\'e Paris Est, CERMICS and INRIA, MICMAC project-team, Ecole des Ponts ParisTech, 6 \& 8 Av. Pascal, 77455 Marne-la-Vall\'ee, France.}
}
\begin{document}

\maketitle

\begin{abstract}
We study the mathematical properties of a nonequilibrium Langevin dynamics which can be used to estimate the shear viscosity of a system. More precisely, we prove a linear response result which allows to relate averages over the nonequilibrium stationary state of the system to equilibrium canonical expectations. We then write a local conservation law for the average longitudinal velocity of the fluid, and show how, under some closure approximation, the viscosity can be extracted from this profile. We finally characterize the asymptotic behavior of the velocity profile, in the limit where either the transverse or the longitudinal friction go to infinity. Some numerical illustrations of the theoretical results are also presented.
\end{abstract}

\begin{keywords} 
Nonequilibrium molecular dynamics, hypocoercivity, linear response theory, shear viscosity
\end{keywords}

\begin{AMS}
82C70, 82C31
\end{AMS}

\pagestyle{myheadings}
\thispagestyle{plain}
\markboth{R. Joubaud and G. Stoltz}{Nonequilibrium shear viscosity computations}


\section{Introduction}

The determination of the macroscopic properties of a material given its
microscopic description is the fundamental goal of statistical physics~\cite{Balian}.
Macroscopic properties can be classified into two categories: (i) equilibrium properties,
such as the heat capacity or the equation of state of the system 
(relating the pressure, the density and the temperature); and (ii) transport properties,
such as the thermal conductivity or the shear viscosity. 
The determination of transport properties is conceptually and numerically 
more challenging than the determination of equilibrium properties 
since transport phenomena depend both on the chosen thermodynamic ensemble 
and on the prescribed microscopic dynamics (which has to leave the thermodynamic state of 
the system invariant). Often, the Hamiltonian dynamics is considered as 
the reference dynamical evolution of the microscopic system. However, this dynamics
exactly preserves the energy of the system, while energy exchanges with the environment
are expected to happen. 
We believe that the choice of the underlying dynamics of the system is a modelling choice.
In any case, a careful study of the dependence of the computed transport properties
as a function of the parameters of the dynamics should be performed.

There are roughly three types of methods for computing transport coefficients:
\begin{enumerate}[(i)]
\item Transient methods, where the relaxation of some local disturbance
  is monitored as a function of time, and macroscopic coefficients are obtained
  by fitting the observed response to the evolution predicted by 
  a macroscopic evolution equation. For instance, 
  a local temperature hot spot can be created in the middle of a homogeneous material,
  and, assuming that the heat equation describes well the evolution of the kinetic temperature
  field, the diffusion of the energy allows to estimate the thermal conductivity
  of the material (see for instance~\cite{HRW05,stoltzPhD});
\item Equilibrium methods, which are based on time integrals of correlation functions,
  the so-called Green-Kubo formulas~\cite{Green,Kubo}. These correlation
  functions are obtained by sampling initial conditions according to the thermodynamic
  ensemble at hand, and averaging over all possible evolutions from these initial conditions.
  For example, the shear viscosity of a fluid is given by 
  \[
  \eta = \frac{|\mathcal{D}|}{k_{\rm B}T} 
  \int_0^{+\infty} \langle \sigma_{xy}(0)\sigma_{xy}(t)\rangle \, dt,
  \]
  where $|\mathcal{D}|$ is the volume of the simulation box, 
  $\sigma_{xy}(t)$ is the off-diagonal term of the microscopic stress tensor at time $t$,
  $T$ is the temperature of the system, and $\langle \cdot\rangle$ denotes a canonical 
  average;
\item Nonequilibrium methods, where the system is in a steady-state characterized by 
  the existence of stationary fluxes and spatial gradients of some quantities.
  These fluxes and gradients can be controlled by forcing terms acting on the boundaries
  of the system (for instance, a velocity profile can be obtained by 
  fixing the average velocity in the extremal slabs of a fluid, see the early 
  review~\cite{HooverAshurst}, or subsequent works such as~\cite{Ciccotti}), or by 
  a bulk process where fictitious forces act on all particles (the so-called synthetic 
  molecular dynamics approach~\cite{evans-moriss}). 
  Boundary driven methods are usually numerically less 
  efficient than bulk-driven methods since there are more correlations in the system
  and the convergence to a steady-state is slower.
\end{enumerate}
A review of the most standard approaches to compute the shear viscosity 
can be read in~\cite{evans-moriss}. See also~\cite{todd-daivis2007} for 
a focus on nonequilibrium methods. 

We focus in this study on bulk-driven nonequilibrium molecular dynamics techniques.
Since the system is driven out-of-equilibrium by a non-gradient force,
some thermostatting mechanism is required to prevent the uncontrolled
increase of the energy and to ensure that a steady-state can indeed be reached.
In many works focusing on the computation of shear viscosity, 
the thermostatting is performed with deterministic dynamics, 
such as Nos\'e-Hoover like thermostats~\cite{Nose84,Hoover85} 
or isokinetic dynamics (see~\cite[Chapter~3]{evans-moriss}). 
The ergodicity of these dynamics is at most
unclear (and non-ergodicity can even be proved rigorously in some limiting 
cases~\cite{LLM07,LLM09}).
The mathematical analysis of these methods is therefore untractable.
Only \emph{formal} results of linear response can be written down.
In some studies, the thermostatting is performed using 
dissipative particle dynamics~\cite{dpd1,dpd2}, which includes stochastic terms.
The ergodicity of these dynamics is however a very difficult issue, and the only existing results
we are aware of concern one-dimensional systems~\cite{SY06}.

We decided to use a standard Langevin dynamics
as the underlying dynamics of the system since this dynamics is ergodic and
has many nice mathematical properties, while still being close enough
to the Hamiltonian dynamics. 
In essence, the nonequilibrium dynamics we propose is obtained from the standard Langevin
dynamics by adding a nongradient force, which can be interpreted as some fictitious external
forcing term. The effect of this term is to create a velocity profile in the direction
of the forcing, and the viscosity of the fluid can be extracted from this profile.
The novelty of this work with respect to the (numerous) existing studies on the
computation of shear viscosity is the rigor of the mathematical arguments
used to prove linear response results and obtain the effective equation
on the observed velocity profile in terms of the applied external force. In particular,
we benefited from recent developments on hypocoercivity~\cite{Villani}. 
Besides, one of our main concern is the dependence of the viscosity as a function
of the parameters of the underlying dynamics, in particular the friction.
We analyzed the large friction asymptotics by extending and adapting 
mathematical studies of the auto-diffusion 
coefficient~\cite{HairerPavliotis04,pavliotis-stuart,HairerPavliotis08}. 
For the low friction dependence, we rely on numerical simulations.

\medskip

This paper is organized as follows.
We start by describing the Langevin dynamics we propose in Section~\ref{sec:dynamics},
and show the existence and uniqueness of the stationary state.
The mathematical properties of this stationary state are studied in Section~\ref{sec:math}
(with the proofs postponed to Section~\ref{sec:proofs}). In particular, we
give a rigorous proof of the linear response. We then show how to compute the 
viscosity, and characterize its asymptotic behavior for large frictions by determining
the limiting behavior of the velocity profile.
We finally present some numerical illustrations of the theoretical results
in Section~\ref{sec:num}.


\section{The nonequilibrium Langevin dynamics}
\label{sec:dynamics}
\subsection{Description of the dynamics}

We consider a system of $N$ particles, enclosed in a periodic simulation box
$\D$. For simplicity of notation, we restrict ourselves to two-dimensional systems,
so that $\D = L_x\T \times L_y\T$ ($\T$ being the one-dimensional torus $\R/\Z$).
The particles are described by their positions $q=(q_1,\ldots,q_N) \in \D^N$
and their momenta $p=(p_1,\ldots,p_N) \in \mathbb{R}^{2N}$, and we assume that they have
identical masses $m >0$. Our results can however straightforwardly be extended
to the general case of particles with different masses and/or three-dimensional systems. 
We write $q_i=(q_{xi},q_{yi}) \in \D$, $p_i = (p_{xi},p_{yi}) \in \mathbb{R}^2$,
as well as $q_x=(q_{x1},\dots,q_{xN})$, $p_x=(p_{x1},\dots,p_{xN})$ and similar definitions for
$q_y,p_y$. The mass density of the system is 
\[
\rho=\frac{mN}{|\D|}, 
\]
$\left|\D\right|=L_xL_y$ being the volume of the box. Finally, we denote by  
\[
H(q,p) = \sum_{i=1}^N\frac{p_i^2}{2m} + V(q)
\]
the Hamiltonian of the system, the function $V$ being the potential energy.

The equations of motion we propose are a linear perturbation of the Langevin equations, with some
additional non-gradient external force in the $x$-direction (the direction of the flow). 
The dynamics reads (for $i=1,\dots,N$):
\begin{equation}
  \label{eq:langevinheq}
  \left \{ \begin{aligned}
    d q_{i,t} &= \frac{p_{i,t}}{m} \, dt,\\
    d p_{xi,t} &= -\nabla_{q_{xi}} V(q_t) \, dt + \xi F(q_{yi,t}) \, dt 
    - \gamma_x \frac{p_{xi,t}}{m} \, dt + \sqrt{\frac{2\gamma_x}{\beta}} \, dW^{xi}_t, \\
    d p_{yi,t} &= -\nabla_{q_{yi}} V(q_t) \, dt - \gamma_y \frac{p_{yi,t}}{m} \, dt 
    + \sqrt{\frac{2\gamma_y}{\beta}} \, dW^{yi}_t,
  \end{aligned} \right.
\end{equation}
where $\beta = 1/(k_{\rm B}T)$ is the inverse temperature, 
$\xi$ is the magnitude of the external nongradient force, 
$(W_t^x,W_t^y)_{t\ge0}$ is a $2N$-dimensional standard Brownian motion and 
the friction coefficients $\gamma_x,\gamma_y$ are real positive numbers.
In order to avoid irrelevant technical issues, we make the following assumption:

\begin{ass}
  \label{ass:smooth}
  The potential $V$  and the external force $F$ belong respectively
  to $C^\infty(\D^N)$ and $C^\infty(L_y\T)$.
\end{ass}

Note that the canonical measure 
\begin{equation}
  \label{eq:eq_inv_measure}
  \psi_{0}(q,p) \, dq \, dp =  Z^{-1} \, \e^{-\beta H(q,p)} \, dq \, dp
\end{equation}
is invariant by the dynamics~\eqref{eq:langevinheq} when $\xi = 0$,
and is actually the only invariant measure
(see for instance the discussion in~\cite{stoltz}).

\subsection{Existence and uniqueness of an invariant measure}

When $\xi \neq 0$, there is no obvious invariant probability measure, and the very existence
of such a measure is not guaranteed \textit{a priori}.
However, in the case when $\gamma_x,\gamma_y>0$,
standard techniques based on Lyapunov functions and hypoellipticity arguments can be resorted 
to to prove the existence and uniqueness of an invariant measure
which has a smooth density with respect to the Lebesgue measure.
For further purposes, it is convenient to consider the reference space $L^2(\psi_{0})$
(where the measure~$\psi_0$ is defined~\eqref{eq:eq_inv_measure}), 
endowed with the scalar product 
\[
\left\langle f,g\right\rangle_{L^2(\psi_0)} := \int_{\mathcal{D}^N\times \R^{2N}} f(q,p) g(q,p)
\, \psi_{0}(q,p) \, dq \, dp.
\]
We can then state the following result.

\begin{theo}
  \label{theo1}
  Consider $\gamma_x,\gamma_y>0$ and suppose that Assumption~\ref{ass:smooth} holds. 
  Then, for any $\xi \in \mathbb{R}$, 
  the dynamics~\eqref{eq:langevinheq} has a unique smooth invariant measure 
  with density $\psi_{\xi}\in C^{\infty}(\mathcal{D}^N\times \R^{2N})$. 
  Besides, there exists $\xi^* > 0$ such that,
  for any $\xi \in (-\xi^*,\xi^*)$, the following expansion 
  holds in $L^2(\psi_0)$:
  \begin{equation}
    \label{eq:expansion_psi_xi}
    \psi_{\xi} = f_{\xi}\psi_{0}, \qquad f_{\xi} = 1 + \sum_{k\ge 1} \xi^k \mathrm{f}_k,
  \end{equation}
  where $\mathrm{f}_k \in L^2(\psi_0)$ is such that 
  $\| \mathrm{f}_k \|_{L^2(\psi_0)} \leq C (\xi^*)^{-k}$.
\end{theo}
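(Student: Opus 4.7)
For the first claim I would rely on classical tools for degenerate stochastic differential equations. The generator $L_\xi$ associated with~\eqref{eq:langevinheq} is hypoelliptic, since the noise vector fields $\partial_{p_{xi}}, \partial_{p_{yi}}$ together with their iterated brackets with the drift (which contains $p_i/m \cdot \nabla_{q_i}$) span the full tangent space, so H\"ormander's rank condition holds and any invariant measure admits a smooth density. Existence and uniqueness of that measure would follow from a Lyapunov condition: taking $\mathcal{K}(q,p) = 1 + H(q,p)$ and using that $V$ is bounded on $\mathcal{D}^N$ and $F$ bounded on $L_y\T$, a direct computation together with Young's inequality to absorb the linear-in-$p_x$ contribution $\xi F(q_{yi}) p_{xi}/m$ yields $L_\xi \mathcal{K} \leq -\alpha\mathcal{K} + C$ for some $\alpha, C > 0$. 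Combined with the strong Feller and irreducibility properties coming from hypoellipticity, the standard Meyn--Tweedie argument then produces a unique smooth invariant density $\psi_\xi$ with exponential return.

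\textbf{Perturbative setup.} For the expansion I would decompose the generator as $L_\xi = L_0 + \xi \widetilde L$ with $\widetilde L = \sum_{i=1}^N F(q_{yi}) \partial_{p_{xi}}$, and denote by $L_0^*$, $\widetilde L^*$ the adjoints in $L^2(\psi_0)$; integration by parts against $\psi_0$ gives $\widetilde L^* = \sum_i F(q_{yi})\left(\beta p_{xi}/m - \partial_{p_{xi}}\right)$. Writing $\psi_\xi = f_\xi \psi_0$, the stationarity condition translates into $L_\xi^* f_\xi = 0$ in $L^2(\psi_0)$. Inserting the ansatz~\eqref{eq:expansion_psi_xi} with the normalization $\int \mathrm{f}_k\,\psi_0 = 0$ for $k \geq 1$ and identifying powers of $\xi$ produces the triangular hierarchy
\begin{equation*}
L_0^* \mathrm{f}_k = -\widetilde L^* \mathrm{f}_{k-1}, \qquad k \geq 1, \quad \mathrm{f}_0 = 1.
\end{equation*}
The solvability condition at each step is automatic: $\ker(L_0^*) = \mathrm{span}(1)$ and $\langle \widetilde L^*\mathrm{f}_{k-1}, 1\rangle_{L^2(\psi_0)} = \langle \mathrm{f}_{k-1}, \widetilde L \cdot 1\rangle_{L^2(\psi_0)} = 0$ since $\widetilde L$ annihilates constants.

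\textbf{Hypocoercivity, main obstacle and identification.} The core analytic step is the quantitative inversion of $L_0^*$ on $\{1\}^\perp \subset L^2(\psi_0)$. Hypocoercivity \`a la Villani ensures exponential decay of the equilibrium Langevin semigroup on $\{1\}^\perp$ in a suitably chosen weighted Sobolev norm and, dually, a bounded pseudo-inverse $(L_0^*)^{-1}$ there. The main obstacle is that $\widetilde L^*$ is \emph{unbounded} on $L^2(\psi_0)$ --- it involves multiplication by $p_{xi}$ and differentiation in $p_{xi}$ --- so one must exploit the regularizing (hypoelliptic) nature of $(L_0^*)^{-1}$ to prove boundedness of the composition $\mathcal{T} := (L_0^*)^{-1}\widetilde L^*$ on a functional space stable under iteration. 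Granting such a bound, setting $\xi^* := \|\mathcal{T}\|^{-1}$ and iterating $\mathrm{f}_k = -\mathcal{T}\mathrm{f}_{k-1}$ yields the claimed $\|\mathrm{f}_k\|_{L^2(\psi_0)} \leq C(\xi^*)^{-k}$. For $|\xi| < \xi^*$ the series then converges in $L^2(\psi_0)$ to some $\widetilde f$ with $L_\xi^* \widetilde f = 0$ and $\int \widetilde f\,\psi_0\,dq\,dp = 1$; analytic perturbation theory around the simple isolated eigenvalue $0$ of $L_0^*$ gives $\dim \ker(L_\xi^*) = 1$ for $|\xi|$ small, so $\widetilde f$ coincides with the density $f_\xi = \psi_\xi/\psi_0$ from the first part, which proves~\eqref{eq:expansion_psi_xi}.
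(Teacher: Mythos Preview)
Your proposal follows the same architecture as the paper: hypoellipticity plus a Lyapunov function for existence/uniqueness, then the hierarchy $\mathcal{L}_0\,\mathrm{f}_k = -\widetilde{L}^*\mathrm{f}_{k-1}$ solved by iterating $\mathcal{T}=(\mathcal{L}_0)^{-1}\widetilde{L}^*$ with $\xi^*=\|\mathcal{T}\|^{-1}$. The one substantive difference is how the boundedness of $\mathcal{T}$ is obtained. You frame it as a hypoelliptic regularization statement and look for a Sobolev-type space stable under iteration; the paper stays entirely in $L^2(\psi_0)$ via a duality trick. From the dissipation identity
\[
\langle \varphi,\mathcal{A}_0\varphi\rangle_{L^2(\psi_0)} = -\frac{1}{\beta}\Big(\gamma_x\|\nabla_{p_x}\varphi\|^2+\gamma_y\|\nabla_{p_y}\varphi\|^2\Big)
\]
one immediately gets $\|\widetilde{L}\varphi\|^2 \le C\,\|\varphi\|\,\|\mathcal{A}_0\varphi\|$, so $\widetilde{L}\,\mathcal{A}_0^{-1}$ is bounded on $\mathcal{H}=\{1\}^\perp\subset L^2(\psi_0)$, and then $\mathcal{L}_0^{-1}\widetilde{L}^* = (\widetilde{L}\,\mathcal{A}_0^{-1})^*$ is bounded by taking adjoints. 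This bypasses the unboundedness of $\widetilde{L}^*$ that you flag as the main obstacle, and no auxiliary functional space is needed. For the identification step you invoke analytic perturbation of the simple eigenvalue $0$; the paper does something more direct: it checks that the series satisfies $\mathcal{L}_\xi f_\xi=0$ and concludes by the uniqueness of the invariant measure already proved in the first part.
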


The proof is presented in Section~\ref{sec:proof_psi_xi}.
The existence and uniqueness of an invariant measure in the case when either
$\gamma_x = 0$ or $\gamma_y = 0$ is a much more difficult question.
To obtain such a result, more precise assumptions on the potential are required
(see Remark~\ref{rmk:gamma=0} in Section~\ref{sec:proof_psi_xi}). 

In the sequel, averages with respect to the measure $\psi_\xi$ are denoted
\[
\langle h \rangle_\xi = \int_{\D^N \times \R^{2N}} h(q,p) \, \psi_\xi(q,p) \, dq \, dp = 
\langle h, f_\xi\rangle_{L^2(\psi_0)}.
\]


\section{Mathematical analysis of the viscosity}
\label{sec:math}

\subsection{Linear response} 
\label{sec:linear_response}

Linear response results allow to compute the average of some properties with respect to 
the nonequilibrium measure in terms of equilibrium averages, in the limit
when the parameter giving the strength of the nonequilibrium forcing vanishes.
To describe the result more precisely, we introduce 
the infinitesimal generator associated to the equilibrium Langevin process 
(\emph{i.e.}~\eqref{eq:langevinheq} in the case when $\xi=0$): 
\[
\A_{0} = \A_{\rm ham} + \A_{\rm thm}, 
\]
where
\[
\A_{\rm ham} = \frac{p}{m} \cdot \nabla_q - \nabla V(q) \cdot \nabla_p, 
\]
and
\[
\A_{\rm thm} = \sum_{\alpha=x,y} \gamma_\alpha \left(-\frac{p_\alpha}{m}\cdot \nabla_{p_\alpha} 
+ \frac1\beta \Delta_{p_\alpha} \right) 
= \frac{\e^{\beta H}}{\beta} 
\sum_{\alpha=x,y} \gamma_\alpha 
\div_{{p_\alpha}} \left(\e^{-\beta H}\nabla_{p_\alpha} \cdot\right).
\]
It can be proved (see Section~\ref{sec:proof_psi_xi}) 
that $\A_0^{-1}$ is a well defined operator on the Hilbert space
\[
\mathcal{H} = \left\{ f \in L^2(\psi_0) \, \left| \, \int_{\D^N \times \mathbb{R}^{2N}} 
f \psi_0 = 0 \right.\right \} 
= L^2(\psi_0) \cap \{ 1 \}^\perp,
\]
where the orthogonality is with respect to the $L^2(\psi_0)$ scalar product.

Define also the adjoint operator on $L^2(\psi_{0})$ of the generator:
\[
\L_0 = \A_0^* = - \A_{\rm ham} + \A_{\rm thm}.
\]
The generator of the nonequilibrium perturbation reads
\[
\B = \sum_{i=1}^N F(q_{yi}) \partial_{p_{xi}},
\]
and its adjoint on $L^2(\psi_{0})$ is
\[
\B^* = - \sum_{i=1}^N F(q_{yi}) \partial_{p_{xi}} + \frac{\beta}{m} p_{xi} F(q_{yi}).
\]
The generator of the dynamics~\eqref{eq:langevinheq} is therefore 
\[
\A_{\xi} = \A_{0}+ \xi \B,
\]
with adjoint $\L_{\xi} = \L_0 + \xi \B^*$.

Linear response is an easy consequence of Theorem~\ref{theo1}:

\begin{corr}
  \label{replin}
  Under the same assumptions as in Theorem~\ref{theo1}, and for any function $h$ regular enough,
  \begin{equation}
    \label{eq:lin}
    \lim_{\xi \rightarrow 0} \frac{\left\langle \A_0 h \right\rangle_\xi}{\xi} 
    = - \frac{\beta}{m} 
    \left\langle h, \sum_{i=1}^N p_{xi} F(q_{yi}) \right\rangle_{L^2(\psi_0)}.
  \end{equation}
  Besides, for any function $h \in \mathcal{H}$,
  \[
  \lim_{\xi \rightarrow 0} \frac{\left\langle h \right\rangle_\xi}{\xi} 
    = - \frac{\beta}{m} 
    \left\langle \A_0^{-1} h, \sum_{i=1}^N p_{xi} F(q_{yi}) \right\rangle_{L^2(\psi_0)}.
  \]
\end{corr}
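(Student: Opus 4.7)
The plan is to read linear response directly off the stationarity identity $\langle \mathcal{A}_\xi h\rangle_\xi = 0$, which holds for any sufficiently regular $h$ because $\psi_\xi$ is invariant under the dynamics generated by $\mathcal{A}_\xi$. Writing $\mathcal{A}_\xi = \mathcal{A}_0 + \xi \mathcal{B}$ this becomes
\[
\frac{\langle \mathcal{A}_0 h\rangle_\xi}{\xi} = -\langle \mathcal{B} h\rangle_\xi,
\]
so the only remaining task is to pass to the limit $\xi\to 0$ on the right-hand side. Thanks to Theorem~\ref{theo1}, $f_\xi = 1 + \xi \mathrm{f}_1 + O(\xi^2)$ converges to $1$ in $L^2(\psi_0)$, so provided $\mathcal{B} h \in L^2(\psi_0)$ (which follows from the smoothness of $F$ and from $h$ being regular enough to have a momentum derivative in $L^2(\psi_0)$), Cauchy--Schwarz yields
\[
\langle \mathcal{B} h\rangle_\xi = \langle \mathcal{B} h, f_\xi\rangle_{L^2(\psi_0)} \xrightarrow[\xi\to 0]{} \langle \mathcal{B} h, 1\rangle_{L^2(\psi_0)}.
\]

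Next I would rewrite this limit via the explicit adjoint $\mathcal{B}^*$ given in the excerpt:
\[
\langle \mathcal{B} h, 1\rangle_{L^2(\psi_0)} = \langle h, \mathcal{B}^* 1\rangle_{L^2(\psi_0)} = \frac{\beta}{m}\left\langle h, \sum_{i=1}^N p_{xi} F(q_{yi})\right\rangle_{L^2(\psi_0)},
\]
since the first-order term in $\mathcal{B}^*$ kills the constant~$1$. Combining the two displays gives~\eqref{eq:lin}. For the second assertion, I would apply~\eqref{eq:lin} to the function $g = \mathcal{A}_0^{-1} h$, which is well defined for $h \in \mathcal{H}$ by the discussion preceding the corollary, and for which $\mathcal{A}_0 g = h$. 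Substituting yields
\[
\lim_{\xi\to 0}\frac{\langle h\rangle_\xi}{\xi} = \lim_{\xi\to 0}\frac{\langle \mathcal{A}_0 g\rangle_\xi}{\xi} = -\frac{\beta}{m}\left\langle \mathcal{A}_0^{-1} h, \sum_{i=1}^N p_{xi} F(q_{yi})\right\rangle_{L^2(\psi_0)},
\]
which is the desired formula.

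The only real obstacle is checking that the initial identity $\langle \mathcal{A}_\xi h\rangle_\xi = 0$ and the integration by parts yielding $\mathcal{B}^*$ can be applied, together with checking the regularity of $\mathcal{A}_0^{-1} h$ needed in the second step: one must know that $\mathcal{A}_0^{-1}$ maps sufficiently regular elements of $\mathcal{H}$ into functions for which $\mathcal{B}\mathcal{A}_0^{-1} h \in L^2(\psi_0)$. This is precisely the kind of hypoelliptic regularization / hypocoercivity estimate alluded to in Section~\ref{sec:proofs} and inherited from the construction of $\mathcal{A}_0^{-1}$ on $\mathcal{H}$; once it is in hand, the corollary is essentially an algebraic consequence of Theorem~\ref{theo1} and of the adjoint computation for $\mathcal{B}$.
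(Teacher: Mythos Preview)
Your argument is correct and leads to the same conclusion, but it is organized differently from the paper's proof. The paper expands $f_\xi = 1 + \xi\,\mathrm{f}_1 + \mathrm{O}(\xi^2)$, uses $\langle \A_0 h,1\rangle_{L^2(\psi_0)}=0$ to isolate the $\xi$-term, and then invokes the identity $\L_0\mathrm{f}_1=-\B^*1$ established in the proof of Theorem~\ref{theo1}. You instead start from the exact stationarity relation $\langle \A_\xi h\rangle_\xi=0$ (equivalently $\L_\xi f_\xi=0$), which yields the \emph{exact} identity $\xi^{-1}\langle \A_0 h\rangle_\xi=-\langle \B h\rangle_\xi$ for every $\xi\neq 0$, and then only need the convergence $f_\xi\to 1$ in $L^2(\psi_0)$ to pass to the limit. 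Your route is slightly more direct in that it never requires the explicit expression of $\mathrm{f}_1$, and it gives the finite-$\xi$ identity as a bonus; the paper's route, by contrast, makes the role of the first-order corrector $\mathrm{f}_1$ explicit, which is useful later (e.g.\ in the proof of Proposition~\ref{prop:local_conservation}).

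For the second assertion your reduction to $g=\A_0^{-1}h$ is fine; the regularity you flag (namely $\B\A_0^{-1}h\in L^2(\psi_0)$) is exactly what Lemma~\ref{lem:bounded_op} provides, since $\B\A_0^{-1}$ is the $L^2(\psi_0)$-adjoint of $\L_0^{-1}\B^*$ and is therefore bounded on~$\mathcal{H}$. The paper does not spell out this second step, but from its expansion it would follow just as easily by writing $\langle h\rangle_\xi=\xi\langle h,\mathrm{f}_1\rangle_{L^2(\psi_0)}+\mathrm{O}(\xi^2)$ for $h\in\mathcal{H}$ and using $\mathrm{f}_1=-\L_0^{-1}\B^*1$.
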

\begin{proof}
Since $\langle \A_0 h, 1 \rangle_{L^2(\psi_0)} = \langle h, \L_0 1 \rangle_{L^2(\psi_0)} = 0$,
it holds
\[
\left\langle \A_0 h \right\rangle_\xi = \langle \A_0 h,f_\xi \rangle_{L^2(\psi_0)}
= \xi \langle \A_0 h,\mathrm{f}_1 \rangle_{L^2(\psi_0)} + \mathrm{O}(\xi^2)
= \xi \langle h, \L_0\mathrm{f}_1 \rangle_{L^2(\psi_0)} + \mathrm{O}(\xi^2).
\]
Now, in the proof of Theorem~\ref{theo1}, we show that (see~\eqref{eq:L0f1})
\[
\L_0\mathrm{f}_1 = - \B^* 1 = -\frac{\beta}{m}\sum_{i=1}^N p_{xi} F(q_{yi}),
\]
which gives the expected result. \qquad
\end{proof}

\subsection{Local conservation of the longitudinal velocity}

We prove in this section a conservation equation for 
velocities in the $x$-direction, when spatial averages over small windows 
in the transverse direction~$y$ are considered. This allows to state
an equation relating the off-diagonal term of the stress tensor and
the nongradient force acting on the system, see~\eqref{eq:local_conservation} below.
Our derivation may be seen as a mathematically rigorous counterpart to the 
seminal work of Irving and Kirkwood~\cite{irving-kirkwood}.
We assume from now on that the potential energy is given by a sum of pairwise interactions: 
\begin{equation}
  \label{eq:potential_pairwise}
  V(q_1,\ldots,q_N)=\sum_{1\le i < j\le N} \mathcal{V}(|q_i-q_j|),
\end{equation}
for some given smooth potential~$\mathcal{V}$.

Consider the following average longitudinal velocity:
\begin{equation}
  \label{eq:def_U}
  U_x^\varepsilon(Y,q,p) = \frac{L_y}{Nm}\sum_{i=1}^N p_{xi}
  \chi_{\varepsilon}\left(q_{yi}-Y\right),
\end{equation}
where $\chi_{\varepsilon}$ (with $0 <\varepsilon \leq 1$) is an approximation of the identity 
on $L_y\T$. 
More precisely, 
\[
\chi_{\varepsilon}(s) = \sum_{n \in \mathbb{Z}} 
\frac{1}{\varepsilon}\chi\left(\frac{s-n L_y}{\varepsilon}\right),
\]
where $\chi\in C^{\infty}(\R)$ has support in~$[0,L_y]$ and $\int_0^{L_y} \chi = 1$.
The factor $L_y$ in~\eqref{eq:def_U} accounts for the fact that $\chi_\varepsilon$
has units of inverse lengths: in fact,
\[
\frac{1}{L_y}\int_0^{L_y} U_x^\varepsilon(Y,q,p) \, dY = \frac{1}{Nm}\sum_{i=1}^N p_{xi}
\]
is the average velocity of the system.
In practice, averages such as~\eqref{eq:def_U} are computed with bin indicator functions (see 
Section~\ref{sec:numerical_implementation}).

We also need a spatially localized (with respect to the altitude~$Y$)
version of the off-diagonal term of the stress tensor.
This quantity is given by the following expression (the fact that it can be interpreted
as some stress tensor is motivated below by the limiting spatial 
average~\eqref{eq:spatial_avg_sigma} as well
as the conservation law~\eqref{eq:local_conservation}):
\begin{equation}
  \label{eq:Sigma}
  \begin{aligned}
  & \Sigma_{xy}^\varepsilon(Y,q,p) \\
  & = \frac{1}{L_x} \left( \sum_{i=1}^N \frac{p_{xi} p_{yi}}{m}\chi_{\varepsilon}\left(q_{yi}-Y\right)
  - \! \! \!
  \sum_{1 \leq i < j \leq N} \! \! \! 
  \mathcal{V}'(|q_i-q_j|)\frac{ q_{xi}-q_{xj}}{|q_i-q_j|}
  \int_{q_{yj}}^{q_{yi}} \chi_{\varepsilon}(s-Y) \, ds \right).
  \end{aligned}
\end{equation}
Note that the limiting spatial average over $Y$ 
\begin{equation}
  \label{eq:spatial_avg_sigma}
  \begin{aligned}
    & \lim_{\varepsilon \to 0} \frac{1}{L_y} \int_0^{L_y} \Sigma_{xy}^\varepsilon(Y,q,p) \, dY \\ 
    & \qquad =\frac{1}{L_xL_y} \left( \sum_{i=1}^N \frac{p_{xi} p_{yi}}{m} 
    -\sum_{1 \leq i < j \leq N}
    \mathcal{V}'(|q_i-q_j|)\frac{ (q_{xi}-q_{xj})(q_{yi}-q_{yj})}{|q_i-q_j|}
    \right)
    \end{aligned}
\end{equation}
is the standard expression encountered for the off-diagonal term 
of the pressure tensor without spatial localization. The expression~\eqref{eq:Sigma}
comes out naturally from the mathematical analysis 
(see the proof of Proposition~\ref{prop:local_conservation}), 
and was already proposed in~\cite{todd-pressure} (where it is called the 'method of planes'). 

The relationship between the local longitudinal velocity and the off-diagonal term of the 
stress tensor is made precise in the following proposition.

\begin{propo}
\label{prop:local_conservation}
The limits
\[
u_x(Y) = \lim_{\varepsilon \to 0}
\lim_{\xi \to 0} \frac{\left\langle U_x^\varepsilon(Y,\cdot)\right\rangle_\xi}{\xi}
\]
and 
\[
\sigma_{xy}(Y) = \lim_{\varepsilon \to 0}
\lim_{\xi \to 0} \frac{\left\langle \Sigma_{xy}^\varepsilon(Y,\cdot)\right\rangle_\xi}{\xi}
\]
belong to $C^\infty(L_y\mathbb{T})$ and 
\begin{equation}
  \label{eq:local_conservation}
  \frac{d\sigma_{xy}(Y)}{dY} + \gamma_{x} \overline{\rho} u_x(Y) = \overline{\rho} F(Y),
\end{equation}
where $\overline{\rho} = \rho/m$ is the particle density. 
\end{propo}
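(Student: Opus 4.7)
The plan is to exploit the stationarity of $\psi_\xi$ under the full generator $\A_\xi = \A_0 + \xi\B$ to get an exact identity at finite $\xi$ and $\varepsilon$, recognize $\Sigma_{xy}^\varepsilon$ inside it as a $Y$-derivative, and then pass to the limits $\xi\to 0$ and $\varepsilon\to 0$. Testing $0 = \langle \A_\xi h\rangle_\xi$ against $h = U_x^\varepsilon(Y,\cdot)$ (which is smooth and linear in $p$, hence compatible with the Gaussian $p$-tails of $\psi_\xi$) gives the pre-limit relation
\[
\langle \A_0 U_x^\varepsilon(Y,\cdot)\rangle_\xi = -\xi\,\langle \B U_x^\varepsilon(Y,\cdot)\rangle_\xi.
\]

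The main algebraic step is to rewrite $\A_0 U_x^\varepsilon = -(\overline{\rho}m)^{-1}\partial_Y \Sigma_{xy}^\varepsilon - (\gamma_x/m)\,U_x^\varepsilon$. Since $U_x^\varepsilon$ is linear in $p$, the $p$-Laplacian in $\A_{\rm thm}$ annihilates it and the first-order friction contributes $-(\gamma_x/m)U_x^\varepsilon$. The Hamiltonian part $\A_{\rm ham}$ yields, first, a kinetic piece $-\partial_Y\!\left[(L_y/Nm^2)\sum_i p_{xi}p_{yi}\chi_\varepsilon(q_{yi}-Y)\right]$ coming from $p_y/m\cdot\partial_{q_y}$, and, second, a force piece $-(L_y/Nm)\sum_i\partial_{q_{xi}}V\,\chi_\varepsilon(q_{yi}-Y)$ coming from $-\nabla V\cdot\nabla_p$. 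Using the pairwise form~\eqref{eq:potential_pairwise} and symmetrizing in $i\leftrightarrow j$, the force piece becomes a sum over pairs $i<j$ of $-\mathcal{V}'(|q_i-q_j|)(q_{xi}-q_{xj})/|q_i-q_j|$ times $\chi_\varepsilon(q_{yi}-Y) - \chi_\varepsilon(q_{yj}-Y)$. Writing this bracket as $-\partial_Y\int_{q_{yj}}^{q_{yi}}\chi_\varepsilon(s-Y)\,ds$ pulls out a second $Y$-derivative, and the two $\partial_Y$-terms recombine into $-(\overline{\rho}m)^{-1}\partial_Y \Sigma_{xy}^\varepsilon$ with $\Sigma_{xy}^\varepsilon$ as defined in~\eqref{eq:Sigma}, so the precise shape of $\Sigma_{xy}^\varepsilon$ is dictated by this calculation.

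Plugging back into the pre-limit identity, dividing by $\xi$, and sending $\xi\to 0$ (the limits exist by the $L^2(\psi_0)$ expansion of Theorem~\ref{theo1}) one finds, after noting that $\B U_x^\varepsilon = (L_y/Nm)\sum_i F(q_{yi})\chi_\varepsilon(q_{yi}-Y)$ and that the marginal of $\psi_0$ on each $q_{yi}$ is the uniform density $1/L_y$ (by translation invariance of the canonical measure under the pairwise-potential assumption),
\[
-\frac{1}{\overline{\rho}m}\,\partial_Y\!\left(\lim_{\xi\to 0}\frac{\langle \Sigma_{xy}^\varepsilon\rangle_\xi}{\xi}\right) - \frac{\gamma_x}{m}\lim_{\xi\to 0}\frac{\langle U_x^\varepsilon\rangle_\xi}{\xi} + \frac{1}{m}\int_0^{L_y}F(y)\chi_\varepsilon(y-Y)\,dy = 0.
\]
Sending $\varepsilon\to 0$ and multiplying by $-\overline{\rho}m$ produces~\eqref{eq:local_conservation}. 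Smoothness of $u_x$ and $\sigma_{xy}$ is obtained by applying Corollary~\ref{replin} to express each $\xi\to 0$ limit as an $L^2(\psi_0)$ pairing with $\L_0^{-1}\Phi$, where $\Phi = -(\beta/m)\sum_i p_{xi}F(q_{yi}) \in \mathcal{H}$; hypoelliptic regularity of $\L_0^{-1}\Phi$, combined with the fact that the $\varepsilon\to 0$ limit amounts to evaluating smooth integrals at $q_{yi}=Y$, yields the $C^\infty$ dependence on $Y$.

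I expect the main obstacle to be the symmetrization that bundles the kinetic and force contributions into an exact $Y$-derivative of $\Sigma_{xy}^\varepsilon$ — the other steps either invoke Theorem~\ref{theo1} or are routine. Secondary care points are the interchange of $\partial_Y$ with $\langle\cdot\rangle_\xi$ and with the $\xi\to 0$ limit (justified by dominated convergence and $L^2(\psi_0)$ continuity), and the use of translation invariance of $\psi_0$ in evaluating $\langle \B U_x^\varepsilon\rangle_0$, which is the only place where the pairwise structure~\eqref{eq:potential_pairwise} is truly essential in the final passage to the limit.
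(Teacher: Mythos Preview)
Your proposal is correct and follows essentially the same route as the paper. The only cosmetic difference is that you start from the exact stationarity identity $\langle \A_\xi U_x^\varepsilon\rangle_\xi = 0$ and pass to the limit, whereas the paper invokes Corollary~\ref{replin} directly; since Corollary~\ref{replin} is itself derived from that same identity (via the expansion of $f_\xi$), the two presentations are interchangeable, and the core computation---decomposing $\A_0 U_x^\varepsilon$ into $-\rho^{-1}\partial_Y\Sigma_{xy}^\varepsilon - (\gamma_x/m)U_x^\varepsilon$ by the pairwise symmetrization, using translation invariance of $\overline{\psi}_0$ for the force term, and invoking hypoelliptic smoothness of $\mathrm{f}_1$ for the $C^\infty$ regularity---is identical.
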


The proof is based on an application of Corollary~\ref{replin} with~\eqref{eq:def_U} as a test
function~$h$. The order of the limits $\varepsilon \to 0$ and
$\xi \to 0$ cannot be inverted since the linear response result
of Corollary~\ref{replin} cannot be applied with $h$ replaced by the limit 
of $\chi_\varepsilon$ (which is a Dirac mass).

Equations similar to~\eqref{eq:local_conservation} could be written down for other quantities 
such as the transverse velocity~$U_y$ and 
longitudinal and transverse energy fluxes 
(see~\cite{irving-kirkwood} for the original derivation of the corresponding equations).

\subsection{Definition and closure relation for shear viscosity computations}

We now discuss a closure relation for~\eqref{eq:local_conservation}, which allows
to obtain an equation on the average velocity only,
from which the viscosity can be extracted.

By analogy with continuum fluid mechanics, we \emph{define} the shear viscosity $\eta$ as
follows:
\begin{equation}
  \label{eq:Newton_law}
  \sigma_{xy}(Y) := -\eta(Y)\dfrac{du_x(Y)}{dY}. 
\end{equation}
This definition leads to the following equation on $u_x$: 
\[
-\frac{d}{dY}\left(\eta(Y) \, \frac{du_x(Y)}{dY}\right) + \gamma_x\overline{\rho}u_x(Y) = 
\overline{\rho} F(Y).
\]
In bulk homogeneous fluids, the simplest closure is to assume that
\begin{equation}
  \label{eq:eta_constant}
  \eta(Y) = \eta > 0,
\end{equation}
so that the following equation on $u_x$ is obtained:
\begin{equation}
  \label{eq:stokesfriction}
  -\eta u_x''(Y) + \gamma_x \overline{\rho} u_x(Y) = \overline{\rho} F(Y).
\end{equation}
In order to ensure the uniqueness of the solution when $\gamma_x = 0$,
an additional condition should be added (such as a vanishing integral over
the domain $L_y \mathbb{T}$).

The equation~\eqref{eq:stokesfriction} obtained with the help of the
closure relation is the basis for numerical methods to compute the 
shear viscosity given a potential energy function~$V$.
We were not able to justify mathematically the assumption~\eqref{eq:eta_constant}.
We nonetheless provide a numerical validation of this assumption in 
Section~\ref{sec:validation_closure}.

\subsection{Asymptotic behaviour of the viscosity for large frictions}
\label{sec:large_gamma}

An important issue is the dependence of the viscosity on the parameters
of the dynamics. For the Langevin dynamics~\eqref{eq:langevinheq},
this means understanding the dependence of the viscosity on the friction 
parameters $\gamma_x,\gamma_y$. The limits $\gamma_x \to 0$ 
or $\gamma_y \to 0$ are very difficult to study mathematically
without strong assumptions on the potential and/or the geometry
of the system (see Remark~\ref{rmk:gamma=0}). We therefore rely on numerical
simulations for these cases (see Sections~\ref{sec:limite_num_gamma_y} 
and~\ref{sec:limite_num_gamma_x}).

On the other hand, the limit when one of the friction parameters goes to 
infinity can be studied. To this end, we have to 
understand the limit of the velocity field~$u_x$ as either
$\gamma_x$ or $\gamma_y$ goes to infinity. This is done by rigorous asymptotic analysis.
Thanks to~\eqref{eq:stokesfriction}, limiting behaviors of the viscosity may
be inferred from the limiting behaviors of the velocity profiles.
The key result to obtain the limiting velocity profile is to characterize
the limit of some averages with respect to specific solutions of the 
Poisson equation (see~\eqref{eq:eq_Poisson_y} and~\eqref{eq:eq_Poisson_x} below).

\subsubsection{Infinite transverse friction}

We start with the case $\gamma_y \to +\infty$, for a fixed value $\gamma_x > 0$.

\begin{theo}[Infinite transverse friction]
\label{prop:gamma_y}
Consider a given smooth function $G$ and a longitudinal friction $\gamma_x > 0$. 
Define $\A_{0}(\gamma_y):=\A_0=\A_{\rm ham} + \gamma_x \A_{x, \rm thm} 
+ \gamma_y \A_{y, \rm thm}$, with
\begin{equation}
  \label{eq:def_A_alpha}
  \A_{\alpha, \rm thm} = -\frac{p_\alpha}{m}\cdot \nabla_{p_\alpha} + \frac1\beta \Delta_{p_\alpha},
\end{equation}
and denote by $f_{\gamma_y}$ the unique solution in~$\mathcal{H}$ of the equation
\begin{equation}
  \label{eq:eq_Poisson_y}
  -\A_{0}(\gamma_y) f_{\gamma_y} = \sum_{i=1}^N p_{xi} G(q_{yi}).
\end{equation}
Then, there exist $f^0,f^1 \in H^1(\psi_0)$ 
and a constant $C>0$ such that, for all $\gamma_{y}\ge \gamma_{x}$,
\begin{equation}
  \label{eq:cv_Poisson_gamma_y}
\left\| f_{\gamma_{y}}- f^0-\gamma_{y}^{-1}f^1\right\|_{H^1(\psi_0)} \le \frac{C}{\gamma_y}.
\end{equation}
Besides, the function $f^0$ is of the general form 
\[
f^0(q,p) = \sum_{i=1}^N G (q_{yi}) \phi_i(q_x,q_y,p_x),
\]
where the functions $\phi_i$ are $C^\infty$.
\end{theo}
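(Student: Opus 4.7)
The plan is to treat this as a singular perturbation problem in the small parameter $\varepsilon = \gamma_y^{-1}$: the operator $-\A_0(\gamma_y)$ splits as $\gamma_y(-\A_{y,\rm thm}) + (-\A_{\rm ham} - \gamma_x \A_{x,\rm thm})$, with leading part the Ornstein-Uhlenbeck operator $-\A_{y,\rm thm}$ acting only in the $p_y$ variable. Its kernel in $L^2(\psi_0)$ consists of functions independent of $p_y$, and the associated spectral projector $\Pi$ is the Gaussian average with respect to $p_y$. I would postulate the ansatz
\[
f_{\gamma_y} = f^0 + \gamma_y^{-1} f^1 + \gamma_y^{-1} r_{\gamma_y},
\]
insert it into~\eqref{eq:eq_Poisson_y} and match in powers of $\gamma_y$.

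Matching the $O(\gamma_y)$ term yields $\A_{y,\rm thm} f^0 = 0$, so $f^0 = f^0(q,p_x)$. The $O(1)$ equation $-\A_{y,\rm thm} f^1 = (\A_{\rm ham} + \gamma_x \A_{x,\rm thm}) f^0 + \sum_i p_{xi} G(q_{yi})$ is solvable iff its right-hand side is in the range of $-\A_{y,\rm thm}$, i.e. $\Pi$ annihilates it. Since $f^0$ is $p_y$-independent, applying $\Pi$ removes only the $\frac{p_y}{m}\cdot\nabla_{q_y}$ part of $\A_{\rm ham}$, leaving the effective operator
\[
\widetilde{\A} = \frac{p_x}{m}\cdot\nabla_{q_x} - \nabla_{q_x}V\cdot\nabla_{p_x} + \gamma_x \A_{x,\rm thm},
\]
and the solvability condition becomes the $(q_x,q_y,p_x)$--Langevin Poisson equation
\[
-\widetilde{\A} f^0 = \sum_{i=1}^N p_{xi}\, G(q_{yi}), \qquad \int f^0 \,\psi_0 = 0,
\]
with $q_y$ playing the role of a parameter. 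Invertibility of $\widetilde{\A}$ on the zero-mean subspace (with smoothness of the solution) follows by the same hypocoercivity arguments used in Theorem~\ref{theo1}, applied to the effective Langevin generator in $(q_x,p_x)$ at each fixed $q_y$. Once $f^0$ is known, $f^1$ is obtained by inverting $-\A_{y,\rm thm}$ on the appropriate subspace; smoothness and $H^1(\psi_0)$ control of $f^0,f^1$ follow from hypoelliptic regularity together with Assumption~\ref{ass:smooth}.

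For the structural form of $f^0$, I would use that $\widetilde{\A}$ differentiates only in $(q_x,p_x)$, so that the functions $G(q_{yj})$ behave as coefficients. Setting $f^0 = \sum_i G(q_{yi})\,\phi_i(q_x,q_y,p_x)$ with $\phi_i$ solving $-\widetilde{\A}\phi_i = p_{xi}$ (parametrized by $q_y$) in the zero-mean subspace yields the required decomposition, and smoothness of each $\phi_i$ again follows from hypoellipticity of $\widetilde{\A}$ and smoothness of $V$.

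The main obstacle is the quantitative remainder estimate~\eqref{eq:cv_Poisson_gamma_y}. By construction, the residual $r_{\gamma_y} = \gamma_y\bigl(f_{\gamma_y}-f^0-\gamma_y^{-1}f^1\bigr)$ satisfies
\[
-\A_0(\gamma_y)\, r_{\gamma_y} = -(\A_{\rm ham}+\gamma_x \A_{x,\rm thm})f^1,
\]
whose right-hand side is independent of $\gamma_y$ and lies in $\mathcal{H}$. The key analytic input is therefore a bound on $\A_0(\gamma_y)^{-1}$, restricted to $\mathcal{H}$, that is uniform in $\gamma_y\ge \gamma_x$ in the $H^1(\psi_0)$ norm. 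I would obtain this by revisiting the hypocoercive estimates used to establish invertibility of $\A_0$ in Section~\ref{sec:proof_psi_xi}, with a modified Villani-type Lyapunov functional whose cross-terms (coupling $\nabla_q$ and $\nabla_p$) are scaled appropriately so that the dissipation from the $\gamma_y \A_{y,\rm thm}$ part does not destroy the estimates as $\gamma_y \to \infty$. Once this uniform bound is established, applying it to the residual equation above immediately yields $\|r_{\gamma_y}\|_{H^1(\psi_0)}\le C$, which is precisely~\eqref{eq:cv_Poisson_gamma_y}.
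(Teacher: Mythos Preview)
Your outline matches the paper's strategy closely: split $\A_0(\gamma_y)=T_0+\gamma_y\A_{y,\rm thm}$, derive the hierarchy, identify $f^0$ via the effective $(q_x,p_x)$--Langevin generator $\mathcal{T}_{q_y}$ (your $\widetilde{\A}$), write the residual equation $-\A_0(\gamma_y)r_{\gamma_y}=T_0 f^1$, and close with a uniform-in-$\gamma_y$ resolvent bound. The structural form of $f^0$ is exactly as you describe.

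The gap is in the last step. The paper does \emph{not} obtain a uniform $H^1(\psi_0)$ bound for $\A_0(\gamma_y)^{-1}$ on all of $\mathcal{H}$, and it does not rescale the cross-terms in the Villani functional. Instead it keeps the fixed bilinear form $\langle\langle\cdot,\cdot\rangle\rangle$ and shows that the extra contribution $-(\gamma_y-\gamma_x)\langle\langle u,\A_{y,\rm thm}u\rangle\rangle$ is nonnegative only for $u$ in the subspace
\[
\mathcal{H}_0=\Bigl\{v\in H^1(\psi_0)\ \Big|\ \Pi v \equiv \textstyle\int_{\R^N} v\,\e^{-\beta p_y^2/2}\,dp_y=0\Bigr\},
\]
the obstruction being a term $-\tfrac{\beta}{2}\sum_i\|\partial_{q_{yi}}u\|^2$ that must be absorbed via the Poincar\'e inequality in $p_y$, which requires $\Pi(\partial_{q_{yi}}u)=0$. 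Your suggestion of ``scaling the cross-terms'' would instead weaken the hypocoercivity of the $\A_0(\gamma_x)$ part in the $q_y$ direction, and it is not clear this can be balanced.

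This restriction to $\mathcal{H}_0$ has a consequence you do not address: one needs $T_0 f^1\in\mathcal{H}_0$, which is \emph{not} automatic. When you write ``$f^1$ is obtained by inverting $-\A_{y,\rm thm}$ on the appropriate subspace'', you are fixing only $f^1-\widetilde{f}^1$; the component $\widetilde{f}^1\in\ker\A_{y,\rm thm}$ is free at that stage. The paper uses precisely this freedom, choosing $\widetilde{f}^1=\mathcal{T}_{q_y}^{-1}g$ for a suitable $g(q,p_x)$ so that $\overline{T_0 f^1}=0$. Without this step the residual equation's right-hand side need not lie in $\mathcal{H}_0$, and the uniform bound cannot be applied.
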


The proof can be read in Section~\ref{sec:proof_gamma_y}.
The above result can be used to understand the limit of $u_x(Y)$ as $\gamma_y \to +\infty$.
Indeed, by Proposition~\ref{replin},
\[
u_x^{\gamma_y,\varepsilon}(Y) := 
\lim_{\xi \to 0} \frac{\left\langle U_x^\varepsilon(Y,\cdot)\right\rangle_\xi}{\xi}
= -\frac{\beta}{m} \left \langle \sum_{i=1}^N p_{xi} F(q_{yi}), 
\mathscr{U}_{\gamma_y}^\varepsilon(Y,q,p) \right \rangle_{L^2(\psi_0)},
\]
where $-\A_{0}(\gamma_y) \mathscr{U}_{\gamma_y}^\varepsilon(Y,\cdot) = U_x^\varepsilon(Y,\cdot)$
is a Poisson equation of the form~\eqref{eq:eq_Poisson_y}
(with $G(y)$ proportional to $ \chi_\varepsilon(y-Y)$).
The convergence result~\eqref{eq:cv_Poisson_gamma_y} shows that 
$\mathscr{U}_{\gamma_y}^\varepsilon(Y,\cdot)$ has a limit as $\gamma_y \to +\infty$, and  
the limiting velocity field reads
\[
u_x^{\infty,\varepsilon}(Y) = \frac{\beta L_y}{Nm^2} \left \langle \sum_{i=1}^N p_{xi} F(q_{yi}), 
\sum_{j=1}^N \chi_\varepsilon(q_{yj}-Y) \phi_j(q_x,q_y,p_x) \right \rangle_{L^2(\psi_0)}.
\]
The latter quantity has a limit as $\varepsilon \to 0$, so 
that the velocity field converges to some limiting field~$u_x^{\infty}$. 
Therefore, the viscosity extracted
from~\eqref{eq:stokesfriction} also has a finite limit.
These theoretical considerations are illustrated by 
numerical simulations in Section~\ref{sec:limite_num_gamma_y}.

\subsubsection{Infinite longitudinal friction}
\label{sec:infinite_gamma_x}

We now consider the limit $\gamma_x \to +\infty$, for a fixed value $\gamma_y > 0$.
In this case, the leading term of the expansion in inverse powers of $\gamma_x$
is~0, and a refined convergence result is needed to discuss the limit of the velocity
profile.

\begin{theo}[Infinite longitudinal friction]
\label{prop:gamma_x}
Consider a given smooth function $G$ and a transverse friction $\gamma_y > 0$. 
Define $\A_{0}(\gamma_x):=\A_0=\A_{\rm ham} + \gamma_x \A_{x, \rm thm} 
+ \gamma_y \A_{y, \rm thm}$,
and denote by $f_{\gamma_x}$ the unique solution in~$\mathcal{H}$ of the equation
\begin{equation}
  \label{eq:eq_Poisson_x}
  -\A_{0}(\gamma_x) f_{\gamma_x} = \sum_{i=1}^N p_{xi} G(q_{yi}).
\end{equation}
Then, there exist $f^1,f^2 \in H^1(\psi_0)$ 
and a constant $C>0$ such that, for all $\gamma_{x}\ge \gamma_{y}$,
\begin{equation}
  \label{eq:cv_Poisson_gamma_x}
\left\| f_{\gamma_{x}}-\gamma_{x}^{-1}f^1-\gamma_{x}^{-2}f^2\right\|_{H^1(\psi_0)} 
\le \frac{C}{\gamma^2_x}.
\end{equation}
Besides, the dependence of the function $f^1$ in the variable $p_x$ can be 
written explicitly as
\[
f^1(q,p) = m \sum_{i=1}^N p_{xi}G(q_{yi}) + \widetilde{f}^1(q,p_y).
\]
\end{theo}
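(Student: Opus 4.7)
The plan is to follow a regular perturbation expansion in inverse powers of $\gamma_x$, combined with a uniform-in-$\gamma_x$ hypocoercive estimate on the resolvent of $\A_0(\gamma_x)$. Substituting the ansatz $f_{\gamma_x} = \gamma_x^{-1}f^1 + \gamma_x^{-2}f^2 + R_{\gamma_x}$ into~\eqref{eq:eq_Poisson_x} and matching orders in $\gamma_x$, the leading equation (order $\gamma_x^0$) reduces to $-\A_{x, \rm thm}f^1 = \sum_{i=1}^N p_{xi}G(q_{yi})$. Since $\A_{x, \rm thm}$ is the Ornstein--Uhlenbeck generator in $p_x$ satisfying $\A_{x, \rm thm}(p_{xi}) = -p_{xi}/m$, a particular solution is $m\sum_i p_{xi}G(q_{yi})$, and since $\ker(\A_{x, \rm thm})$ consists of the functions independent of $p_x$, the general solution has the announced form $f^1 = m\sum_i p_{xi}G(q_{yi}) + \widetilde{f}^1(q,p_y)$, with $\widetilde{f}^1$ to be fixed at the next order.

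At order $\gamma_x^{-1}$, the equation reads $\A_{x, \rm thm}f^2 = -(\A_{\rm ham} + \gamma_y\A_{y, \rm thm})f^1$, whose solvability in $f^2$ requires the right-hand side to be orthogonal to $\ker(\A_{x, \rm thm})$, i.e., to have zero conditional $p_x$-expectation. Evaluating this expectation using $\mathbb{E}_{p_x}[p_{xi}p_{xj}] = (m/\beta)\delta_{ij}$ reduces the condition to a Poisson equation for $\widetilde{f}^1$ governed by the ``frozen $q_x$'' $y$-Langevin generator $\A_{y,0} := \frac{p_y}{m}\cdot\nabla_{q_y} - \nabla_{q_y}V\cdot\nabla_{p_y} + \gamma_y\A_{y, \rm thm}$, with an explicit source built from $G$ and $\nabla V$. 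A direct transposition of the hypocoercive and hypoelliptic arguments used in the proof of Theorem~\ref{theo1} to this reduced dynamics produces a smooth $\widetilde{f}^1 \in H^1(\psi_0)$; once $f^1$ is fully determined, $f^2$ is recovered by direct inversion of $\A_{x, \rm thm}$ on $\ker(\A_{x, \rm thm})^\perp$, which is elementary via the Hermite spectral decomposition of the Ornstein--Uhlenbeck operator.

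With $f^1,f^2$ in hand, a straightforward algebraic manipulation shows that the remainder satisfies $-\A_0(\gamma_x)R_{\gamma_x} = \gamma_x^{-2}(\A_{\rm ham} + \gamma_y\A_{y, \rm thm})f^2$, whose right-hand side lies in $\mathcal{H}$ and is uniformly bounded there, provided $f^2$ enjoys a little extra Sobolev regularity (available by the hypoelliptic regularisation already used in Section~\ref{sec:proof_psi_xi}). To close the argument, one applies a resolvent bound of the form $\|\A_0(\gamma_x)^{-1}\|_{\mathcal{H}\to H^1(\psi_0)} \le C$, uniform in $\gamma_x \ge \gamma_y$, which yields~\eqref{eq:cv_Poisson_gamma_x}. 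This uniform resolvent estimate is the main technical obstacle: the hypocoercive constants of~\cite{Villani} deteriorate as $\gamma_x \to +\infty$. A natural strategy is to rescale $\gamma_x^{-1}\A_0(\gamma_x) = \A_{x, \rm thm} + \gamma_x^{-1}(\A_{\rm ham} + \gamma_y\A_{y, \rm thm})$ and view the problem as a perturbation of the symmetric strongly dissipative operator $\A_{x, \rm thm}$, combining its spectral gap on $\ker(\A_{x, \rm thm})^\perp$ with a Villani-type modified entropy that transfers dissipation to the $(q,p_y)$ variables through cross-commutators with $\A_{\rm ham}$, in the spirit of the large-friction analysis of~\cite{HairerPavliotis08}.
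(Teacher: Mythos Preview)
Your overall strategy---formal expansion in $\gamma_x^{-1}$, identification of $f^1$ and $f^2$ from a hierarchy, and control of the remainder via a uniform resolvent bound---matches the paper's. There is, however, a genuine gap in the last step.

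You claim a bound $\|\A_0(\gamma_x)^{-1}\|_{\mathcal{H}\to H^1(\psi_0)} \le C$ uniform in $\gamma_x \ge \gamma_y$. This is too strong, and your own heuristic shows why: after rescaling, $\gamma_x^{-1}\A_0(\gamma_x) = \A_{x,\rm thm} + \gamma_x^{-1}T_0$ is a perturbation of $\A_{x,\rm thm}$, which has a large kernel (all functions independent of $p_x$). On that kernel there is no dissipation from $\A_{x,\rm thm}$, and the overdamped scaling of Langevin dynamics makes the full resolvent norm grow like $\gamma_x$ on $\mathcal{H}$. What the paper establishes (transposing the argument of Section~\ref{sec:proof_gamma_y}, itself following~\cite{HairerPavliotis08}) is the uniform coercivity estimate only on the subspace
\[
\mathcal{H}_0 = \left\{ v \in H^1(\psi_0) \ \left| \ \int_{\R^N} v(q,p)\, \mathrm{e}^{-\beta p_x^2/2m}\, dp_x = 0 \right. \right\},
\]
i.e.\ on $\ker(\A_{x,\rm thm})^\perp$. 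Hence you must ensure that the source $T_0 f^2$ for the remainder equation lies in $\mathcal{H}_0$. Your construction of $f^2$ by inverting $\A_{x,\rm thm}$ via Hermite decomposition gives only a particular solution; in general $T_0 f^2 \notin \mathcal{H}_0$ because $\A_{\rm ham}$ contains $-\nabla_{q_x}V\cdot\nabla_{p_x}$, which produces a nonzero $p_x$-average. The remedy, carried out in the paper, is to add a kernel element $\widetilde{f}^2(q,p_y)$ to $f^2$ and determine it by solving yet another Poisson equation in $\mathcal{T}_{q_x}$ so that the $p_x$-average of $T_0 f^2$ vanishes.

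A smaller point: the Poisson equation for $\widetilde{f}^1$ driven by $\mathcal{T}_{q_x}$ has source proportional to $\sum_i \partial_{q_{xi}}V(q)\,G(q_{yi})$, and you need this to be orthogonal to $\ker(\mathcal{T}_{q_x}^*) = \mathrm{Span}(1)$ in $L^2(\Psi_{q_x})$ before invoking solvability. The paper checks this explicitly via an antisymmetrisation argument exploiting the pairwise structure~\eqref{eq:potential_pairwise} of the potential; it is not automatic from hypoellipticity alone.
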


The proof can be read in Section~\ref{sec:proof_gamma_x}. 
To obtain asymptotics on the velocity field, we apply the above convergence result
with $G(y)$ proportional to $ \chi_\varepsilon(y-Y)$ (denoting by $f^1_\varepsilon$ the 
first term in the expansion in inverse powers of~$\gamma_x$):
\[
\begin{aligned}
u_x^\varepsilon(Y) := 
\lim_{\xi \to 0} \frac{\left\langle U_x^\varepsilon(Y,\cdot)\right\rangle_\xi}{\xi}
& = \frac{\beta L_y}{N m^2 \gamma_x} \left \langle \sum_{i=1}^N p_{xi} F(q_{yi}), 
f^1_\varepsilon(q,p) \right \rangle_{L^2(\psi_0)} + \mathrm{O}\left(\frac{1}{\gamma_x^2}\right) \\
& = \frac{\beta L_y}{N m \gamma_x} \left \langle \sum_{i=1}^N p_{xi} F(q_{yi}), 
\sum_{j=1}^N p_{xj}\chi_\varepsilon(q_{yj}-Y) \right \rangle_{L^2(\psi_0)} 
\! \! \! \! \! \! \! \! \! \! + \mathrm{O}\left(\frac{1}{\gamma_x^2}\right) \\
& = \frac{L_y}{\gamma_x} \int_0^{L_y} F(y) \chi_\varepsilon(y-Y) \, dy 
+ \mathrm{O}\left(\frac{1}{\gamma_x^2}\right),
\end{aligned}
\]
where we have used the fact that $\langle p_{xi}, \widetilde{f}_\varepsilon^1 
\rangle_{L^2(\psi_0)} = 0$ since $\widetilde{f}_\varepsilon^1$ does not depend on~$p_x$.
This shows that the following limit is well defined:
\begin{equation}
  \label{eq:limit_velocity_gamma_x}
  \overline{u}_x(Y) = \lim_{\varepsilon \to 0} \lim_{\gamma_x \to +\infty} 
  \gamma_x u_x^\varepsilon(Y) = F(Y).
\end{equation}
The limiting velocity profile $\overline{u}_x$ 
does not depend on the specific interaction potential~$\mathcal{V}$,
and is the same for all systems with pairwise interactions.
Besides, the viscosity~$\eta$ cannot be extracted from~\eqref{eq:stokesfriction}
since $(u_x^\varepsilon)''$ is of order $\gamma_x^{-1}$ while $F$ and $\gamma^\varepsilon_x u_x$ 
are of order~1.
The limit $\gamma_x \to +\infty$ is therefore somewhat degenerate from a theoretical viewpoint.
Numerical simulations however allow to investigate the large $\gamma_x$ asymptotics,
see Section~\ref{sec:limite_num_gamma_x}.


\section{Numerical results for the Lennard-Jones fluid}
\label{sec:num}

We present in this section some numerical illustrations of the theoretical
results obtained in Section~\ref{sec:math}.

\subsection{Numerical implementation}
\label{sec:numerical_implementation}

\paragraph{Description of the system}
We consider a Lennard-Jones fluid, which is a standard test case for shear flow
computations, in a 2-dimensional setting (in order
to limit the number of degrees of freedom and henceforth obtain 
results with lower statistical uncertainties).
The potential energy is of the form~\eqref{eq:potential_pairwise}, with
\begin{equation}
\mathcal{V}_{\rm LJ}(r) = 4\varepsilon_{\rm LJ} 
\left( \left(\frac{d_{\rm LJ}}{r}\right)^{12}-\left(\frac{d_{\rm LJ}}{r}\right)^{6} \right).
\end{equation}
Actually, it is numerically more convenient to work with a truncated potential, which reads:
\[
\mathcal{V}(r) = \begin{cases}
\mathcal{V}_{\rm LJ}(r) & \quad \mathrm{if} \ r \leq r_{\rm spline},\\
v_{\rm spline}(r) & \quad \mathrm{if} \ r_{\rm spline} \leq r \leq r_{\rm cut},\\
0 & \quad \mathrm{if} \ r \geq r_{\rm cut}.
\end{cases}
\]
The function $v_{\rm spline}$ is a polynomial of order~3 which is such that the potential
is~$C^1$ on $(0,+\infty)$
(note that there is a singularity at $r=0$ so that this potential does not 
satisfy Assumption~\ref{ass:smooth}. However, it seems that this singularity does not
show up in the numerical simulations. Any problem related to this singularity
could be overcome by modifying appropriately the potential for the very small values of~$r$).
We use $r_{\rm cut}=3d_{\rm LJ}$ and $r_{\rm spline}=0.9\,r_{\rm cut}$. 

All the results presented below are in reduced units, which are 
determined by setting to~1 
the energy~$\varepsilon_{\rm LJ}$, the length~$d_{\rm LJ}$, and the mass~$m$.
The remaining tunable parameters of the model are the force amplitude~$\xi$ and the friction 
parameters~$\gamma_x,\, \gamma_y$.

The thermodynamic state of the system is determined by 
the fluid mass density $\rho$ and the temperature $T$. In the numerical illustrations
below, we set $\beta = 0.4$, 
$\rho = 0.69$ and consider $L_x = 360$ and $L_y = 18$. The number of simulated
particles is therefore $N =4500$.

We have checked that the thermodynamic limit is attained for the systems we simulate,
\emph{i.e.} that the values of the viscosity and the profiles we present 
are converged with respect to increasing values of $L_x,L_y$ (at fixed density), see~\cite{Remi}.

\paragraph{Nongradient forces}
We consider three different external perturbations, which are all normalized
so that $-1 \leq F(y) \leq 1$:
\begin{enumerate}[(i)]
\item sinusoidal perturbation:
  $\dps F(y)=\sin\left(\frac{2\pi y}{L_y}\right)$;
\item piecewise linear perturbation:
  $\dps F(y)= \begin{cases}
  \displaystyle{ \frac{4}{L_y}\left(y-\frac{L_y}{4}\right) }, 
  & \displaystyle{0 \leq y \le \frac{L_y}{2}},\\[10pt]
  \displaystyle{\frac{4}{L_y}\left(\frac{3L_y}{4}-y\right)}, 
  & \displaystyle{\frac{L_y}{2} \leq y \leq L_y};\\
\end{cases}
$
\item piecewise constant constant perturbation:
  $\dps 
  F(y)= \begin{cases}
  \dps 1, & \displaystyle{0 < y < \frac{L_y}{2}},\\[10pt]
  \dps -1, & \displaystyle{\frac{L_y}{2}< y < L_y}.\\
  \end{cases}$
\end{enumerate}
Note that only the sinusoidal force satisfies Assumption~\ref{ass:smooth}.
This shape of perturbation, introduced in~\cite{gosling73}, 
is the most popular choice for shear viscosity computations.

\paragraph{Integration of the dynamics}
The dynamics~\eqref{eq:langevinheq} is discretized using a standard splitting scheme,
similar to the schemes proposed in~\cite{LRS10}. 
The evolution is decomposed as the superposition of (i) a Hamiltonian part, which is
integrated with the standard Verlet scheme~\cite{Verlet,HairerLubichWanner06}; 
and (ii) a fluctuation/dissipation part containing 
also the nongradient force, which can be integrated analytically since it is
an Ornstein-Uhlenbeck process with a constant drift.
The numerical scheme reads
\begin{equation}
\label{eq:numerical_scheme}
\left\{
\begin{aligned}
  p^{n+1/4} &= p^n - \frac{\Delta t}{2} \nabla V(q^n),\\
  q^{n+1}   &= q^n + \Delta t \, p^{n+1/4}, \\
  p^{n+1/2} &= p^{n+1/4} - \frac{\Delta t}{2} \nabla V(q^{n+1}),\\
  p_{xi}^{n+1} &= \alpha_x p_{xi}^{n+1/2} + \sqrt{\frac{1}{\beta}(1-\alpha_x^2)} \, G_{xi}^n 
  + \left(1-\alpha_x\right) \frac{\xi}{\gamma_x}F\left(q_{yi}^{n+1}\right), \quad i=1,\dots,N\\
  p_y^{n+1} &= \alpha_y p_y^{n+1/2} + \sqrt{\frac{1}{\beta}(1-\alpha_y^2)} G_y^n,
\end{aligned}
\right.
\end{equation}
where $\alpha_{x,y} = \exp(-\gamma_{x,y}\Delta t)$, and $G_x^n,G_y^n$ 
are independent and identically distributed standard Gaussian random variables.
Note that this scheme is well behaved in the limits $\gamma_x \to +\infty$ and/or
$\gamma_y \to +\infty$, as well as in the limits $\gamma_y \rightarrow 0$ or
$\gamma_x \rightarrow 0$ (the well posedness of the latter case
is a consequence of the limit 
$\left(1-\alpha_x\right)/\gamma_x \rightarrow \Delta t$ as $\gamma_x \to 0$).
It reduces to the standard Verlet scheme when $F=0$ and $\gamma_x = \gamma_y = 0$.

We use $\Delta t=0.005$ in all the simulations below. This time step ensures that
the relative error in energy is about 1\% for the Verlet scheme. 

\paragraph{Numerical localization}
To analyze the various fields which can be constructed from the numerical data
generated by the simulation (longitudinal velocity, off-diagonal component of the stress
tensor, kinetic temperatures, etc), we use a binning procedure in the $Y$ variable.
More precisely, we introduce a mesh with a uniform spacing $\Delta Y$, centered
on the altitudes $Y_s = (s+1/2) \Delta Y$ (with $0 \leq s \leq S-1$ and $S \Delta Y = L_y$).

The microscopic observables we wish to average are either the longitudinal 
velocity~\eqref{eq:def_U} or the off diagonal stress tensor~\eqref{eq:Sigma}.
Both functions are of the general form
\[
A^\varepsilon(Y,q,p) = \sum_{i=1}^N a_i(q,p) \Phi_\varepsilon(q_{yi}-Y),
\]
where $\Phi_\varepsilon$ is either $\chi_\varepsilon$ or an integral of this function.
Averages of such functions over each cell are computed as 
\[
\mathscr{A}^\varepsilon_s = \frac{1}{\xi \Delta Y} 
\int_{Y_s - \Delta Y/2}^{Y_s + \Delta Y/2} \left \langle A(Y,\cdot) \right
\rangle_{\xi} \, dY = \frac{1}{\xi \Delta Y}  
\left \langle \int_{Y_s - \Delta Y/2}^{Y_s + \Delta Y/2} 
A^\varepsilon(Y,\cdot) \, dY \right
\rangle_{\xi}.
\]
Taking advantage of the integration in the $Y$ variable, it is possible
to take the limit $\varepsilon \to 0$ in the latter expression. This amounts to compute
ensemble averages with respect to bin indicator functions (or their integrals).
For instance, the average longitudinal velocity in the $s$th bin is
\[
\mathscr{U}_s = \frac{L_y}{\xi N m \Delta Y} \left \langle \sum_{i=1}^N p_{xi} 
\mathbf{1}_{[Y_s - \Delta Y/2,Y_s + \Delta Y/2]}(q_{yi}) 
\right \rangle_{\xi}.
\]
In practice, the ensemble average $\langle \cdot \rangle_{\xi}$ 
is computed as a time average over trajectories
$(q^n,p^n)_{n=1,\dots,N_{\rm iter}}$.

\paragraph{Estimation of the viscosity}
The solutions of Equation~\eqref{eq:stokesfriction} are periodic in the $Y$-variable and 
are hence most easily
analyzed using Fourier series (see for instance the discussion 
in~\cite{evans-moriss,todd-hansen2007}).
We consider the Fourier coefficients 
of the the average longitudinal velocity $u_x$ and the force~$F$, 
given respectively for $k\in \Z$ by
\begin{equation}
  \label{eq:fourier}
  U_k = \frac{1}{L_y}\int_0^{L_y} u_x(y)\exp\left(\frac{2\ri k\pi y}{L_y}\right)dy,
  \quad
  F_k = \frac{1}{L_y}\int_0^{L_y} F(y)\exp\left(\frac{2\ri k\pi y}{L_y}\right)dy.
\end{equation}
The coefficients~$U_k$ can be esimtated numerically using trajectory averages as
\begin{equation}
  \label{eq:fourier1}
  U_k^{N_{\rm iter}}= 
  \frac{1}{N_{\rm iter} \xi N} \sum_{n=1}^{N_{\rm iter}} \sum_{j=1}^N \frac{p_{xj}^n}{m}
  \exp\left(\frac{2\ri k\pi q_{yj}^m}{L_y}\right).
\end{equation}
This is a valid estimation provided the marginal distribution in the position 
of one particle is the uniform law on the domain. By translation invariance, 
this is true when no external force is present. It remains approximately
true when $\xi$ is not too large. We checked that this approximation has no influence
on the presented numerical results.

The shear viscosity is obtained from a Fourier analysis of~\eqref{eq:stokesfriction}.
The value of $\eta$ should be independent of $k \in \mathbb{Z}$. It should also satisfy
the following equation: 
\begin{equation}
  \label{eq:fourierU}
 U_k = \frac{F_k}{\dfrac{\eta}{\overline{\rho}}\left(\dfrac{2\pi}{L_y}\right)^2 k^2+\gamma_x}.
\end{equation}
The shear viscosity is finally obtained as
\begin{equation}
  \label{eq:viscosity}
  \eta = \overline{\rho}\left(\frac{F_k}{U_k}-\gamma_x\right)\left(\frac{L_y}{2k\pi}\right)^2.
\end{equation}

A confidence interval on the value of~$U_k$ can straightforwardly 
be obtained from the estimator~\eqref{eq:fourier1} using block averaging procedures.
The statistical uncertainty on the viscosity is then obtained with~\eqref{eq:viscosity}.
Since the coefficients~$U_k$ decrease very rapidly
as $|k|$~increases, the relative statistical errors increase rapidly as well. 
We therefore restricted ourselves to $|k|=1$ in our numerical simulations.

\subsection{Numerical results}
\label{sec:num_res_2D_fluid}

In all cases, time averages were computed over $N_{\rm iter} \simeq 10^7$ iterations.

\subsubsection{Linear response}

We first verify numerically the linearity of the longitudinal velocity
as a function of the magnitude~$\xi$ of the nongradient force. 
More precisely, we check that $|U_1|$ is constant, 
for the three forces~$F$ at hand, in the case when $(\gamma_x,\gamma_y)
= (1,1)$ (see Figure~\ref{fig:linearity}).
In the sequel, unless otherwise stated, the numerical results are 
obtained with $\xi = 0.1$.

The numerical results show that linear response is valid even for values of~$\xi$ large enough.
In fact, a more refined analysis (see~\cite{Remi}) 
shows that, even if no nonlinear effect can be observed on the longitudinal velocity
for the values of~$\xi$ we considered, nonlinear effects on the kinetic temperature
cannot be ignored for values of $|\xi| \geq 0.1$. 

Other numerical simulations (not reported here, see~\cite{Remi})
show that linear response actually applies in the case when $\gamma_x = 0$
although this case is not covered by our theoretical analysis.

\begin{figure}[h]
\psfrag{x}{$\xi$}
\psfrag{y}{$\left|U_1\right|$}
\psfrag{a}{$\gamma_x=0$}
\psfrag{b}{$\gamma_x=1$}
\centerline{\includegraphics[width=0.6\linewidth]{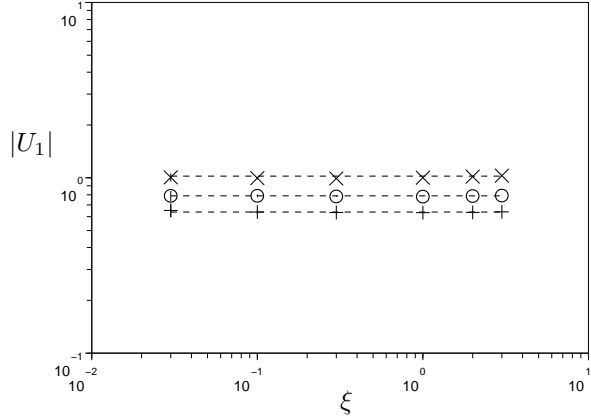}}
\caption{Value of $\left|U_1\right|$ as a function of $\xi$, for
  $(\gamma_x,\gamma_y) = (1,1)$ and the three nongradient forces
  at hand (piecewise constant perturbation $\times$; piecewise linear perturbation $+$; 
  sinusoidal perturbation $\circ$).}
\label{fig:linearity}
\end{figure}

\subsubsection{Validation of the closure}
\label{sec:validation_closure}

We present in Figures~\ref{fig:vel06},~\ref{fig:vel05} and~\ref{fig:vel07} 
the numerical approximations of the longitudinal velocity~$u_x$ and the off-diagonal
term of the stress tensor~$\sigma_{xy}$. The latter function is compared
to the quantity $-\eta u_x'$, where $\eta$ is obtained from~\eqref{eq:viscosity}, and
$u_x'$ is evaluated using a second order finite difference.
The good agreement between $\sigma_{xy}$ and~$-\eta u_x'$ validates the 
assumption~\eqref{eq:eta_constant}, the discrepancies resulting from statistical
fluctuations magnified by the numerical derivative, and also,
for the piecewise constant force, from the singularity
at $L_y/2$.

Besides, the velocity profile is consistent with~\eqref{eq:stokesfriction} (as can be checked by 
comparing the numerical solution and the solution of~\eqref{eq:stokesfriction} 
computed with the value
of~$\eta$ estimated from the simulation).

\begin{figure}[h]
  \psfrag{F}{$F$}
  \psfrag{U}{$u$}
  \psfrag{Y}{$Y$}
  \psfrag{v}{value}
\psfrag{S}{$\sigma_{xy}$}
\psfrag{D}{$-\eta u'$}
\begin{center}
\centerline{\includegraphics[width=0.5\linewidth]{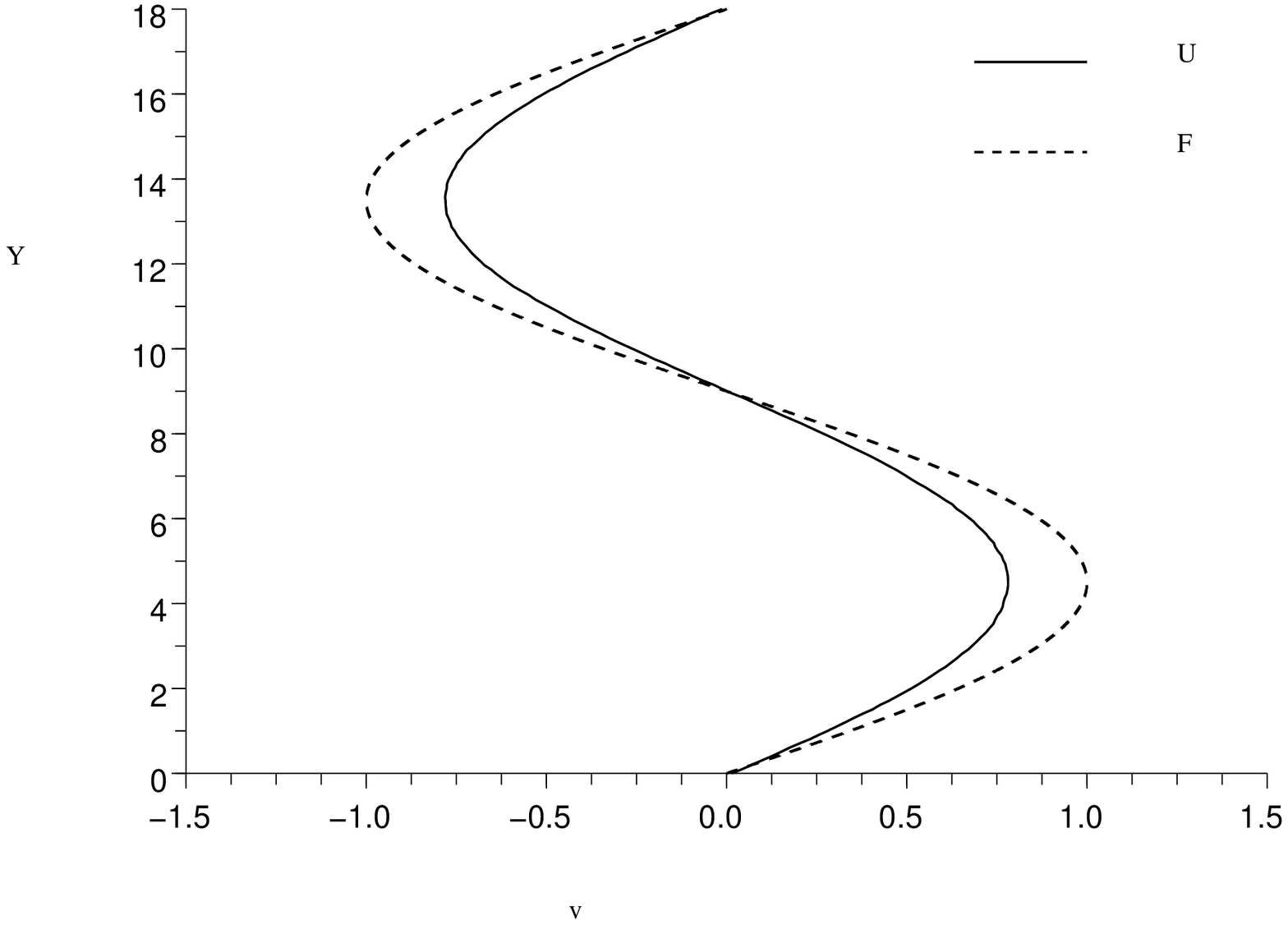}
\includegraphics[width=0.5\linewidth]{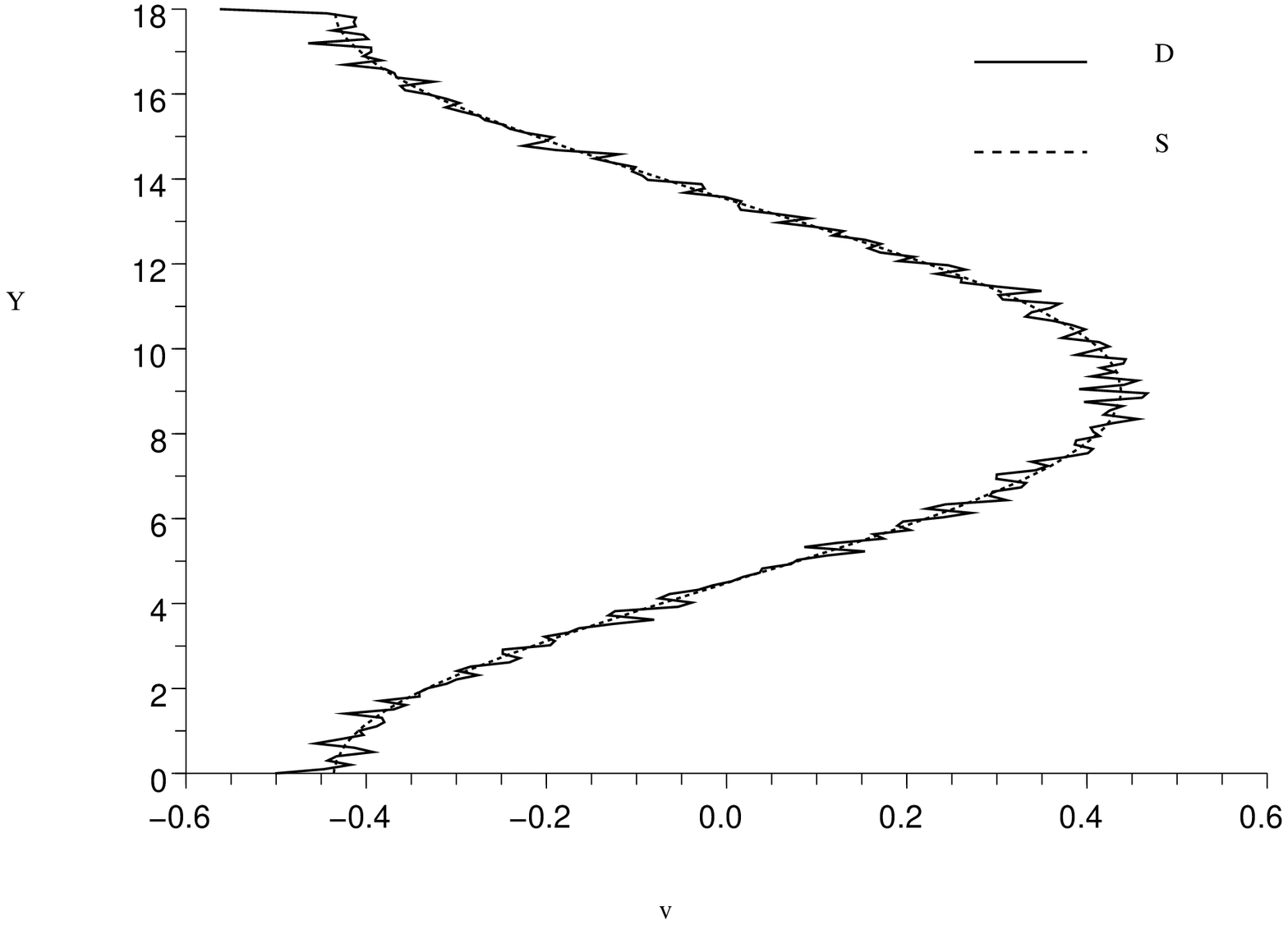}}
\end{center}\caption{Velocity profile and off diagonal component of the stress tensor 
for the sinusoidal nongradient force.}
\label{fig:vel06}
\end{figure} 

\begin{figure}[h]
\psfrag{F}{$F$}
\psfrag{U}{$u$}
\psfrag{Y}{$Y$}
\psfrag{v}{value}
\psfrag{S}{$\sigma_{xy}$}
\psfrag{D}{$-\eta u'$}
\begin{center}
\centerline{\includegraphics[width=0.5\linewidth]{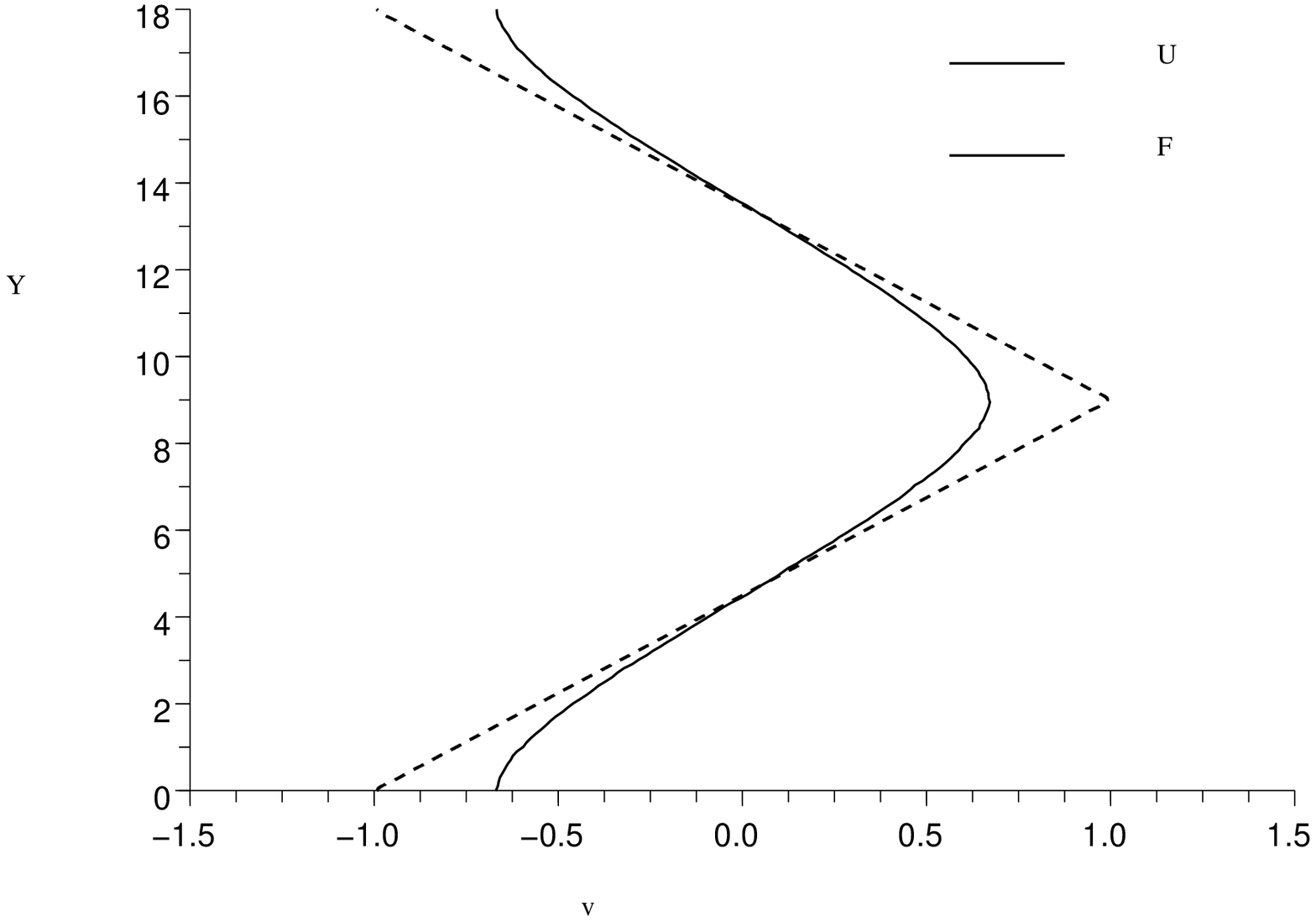}
\includegraphics[width=0.5\linewidth]{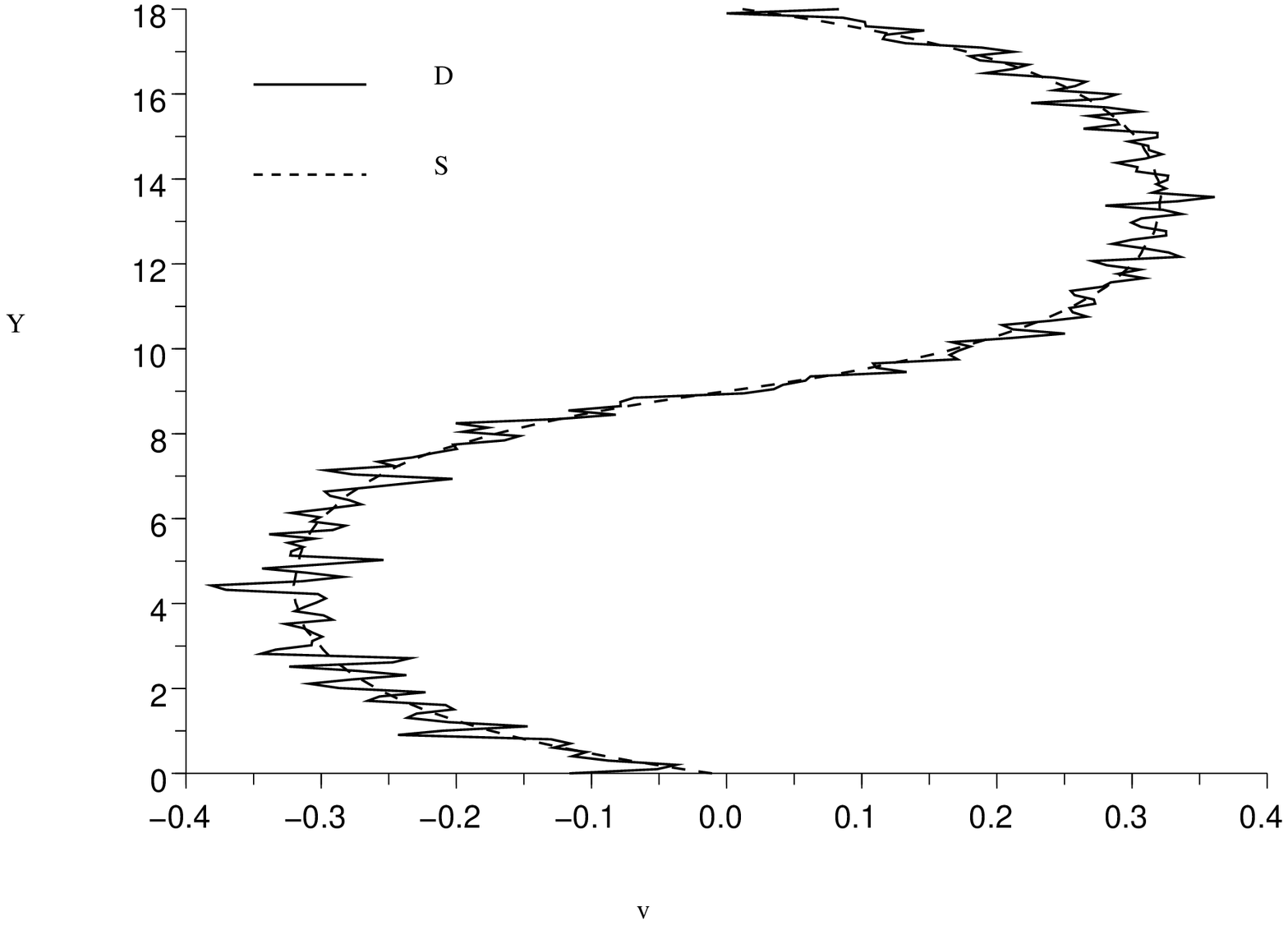}}
\end{center}\caption{Velocity profile and off diagonal component of the stress tensor 
for the piecewise linear nongradient force.}
\label{fig:vel05}
\end{figure} 

\begin{figure}[h]
  \psfrag{F}{$F$}
  \psfrag{U}{$u$}
  \psfrag{Y}{$Y$}
  \psfrag{v}{value}
  \psfrag{S}{$\sigma_{xy}$}
  \psfrag{D}{$-\eta u'$}
\begin{center}
\centerline{\includegraphics[width=0.5\linewidth]{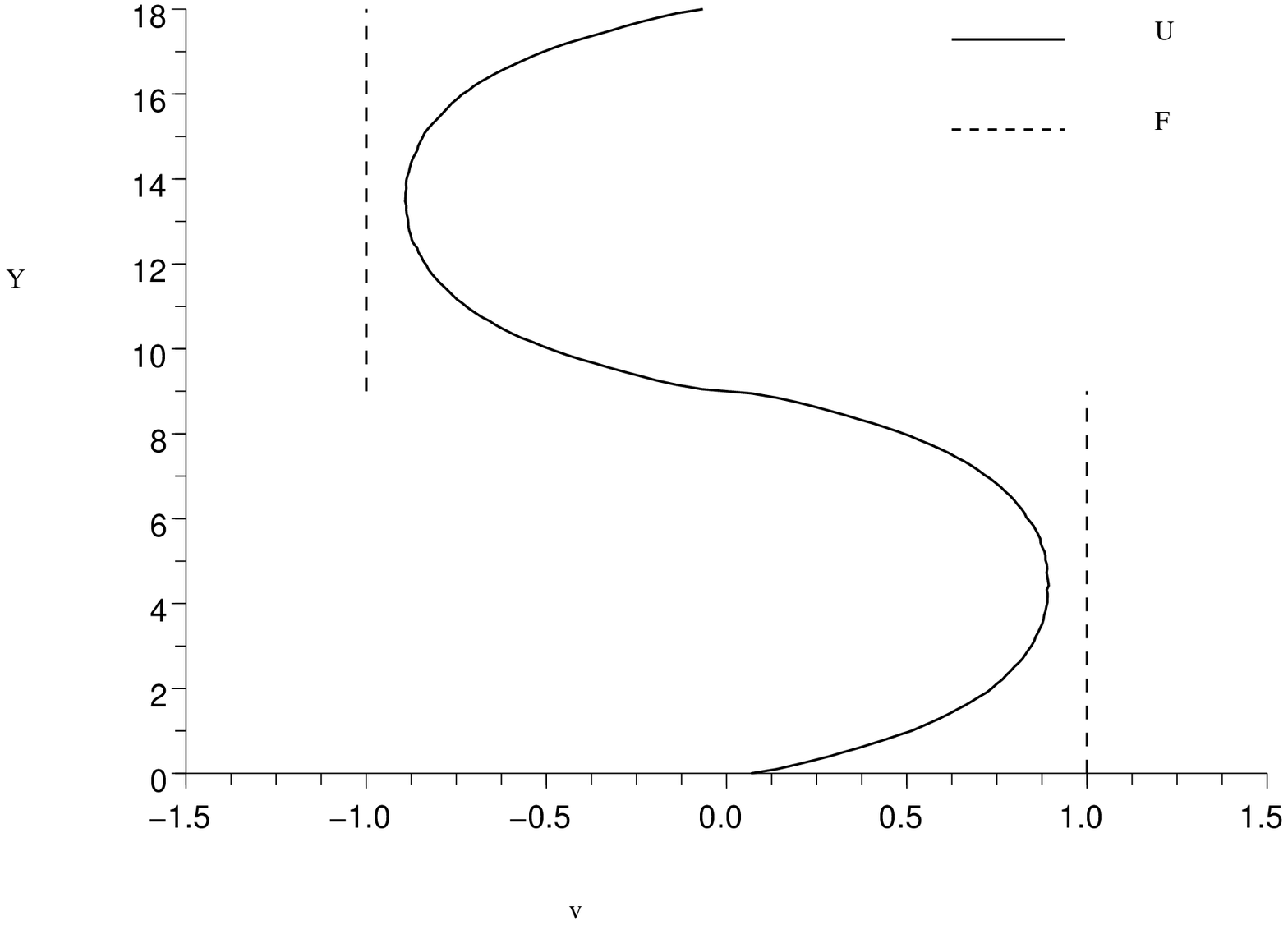}
\includegraphics[width=0.5\linewidth]{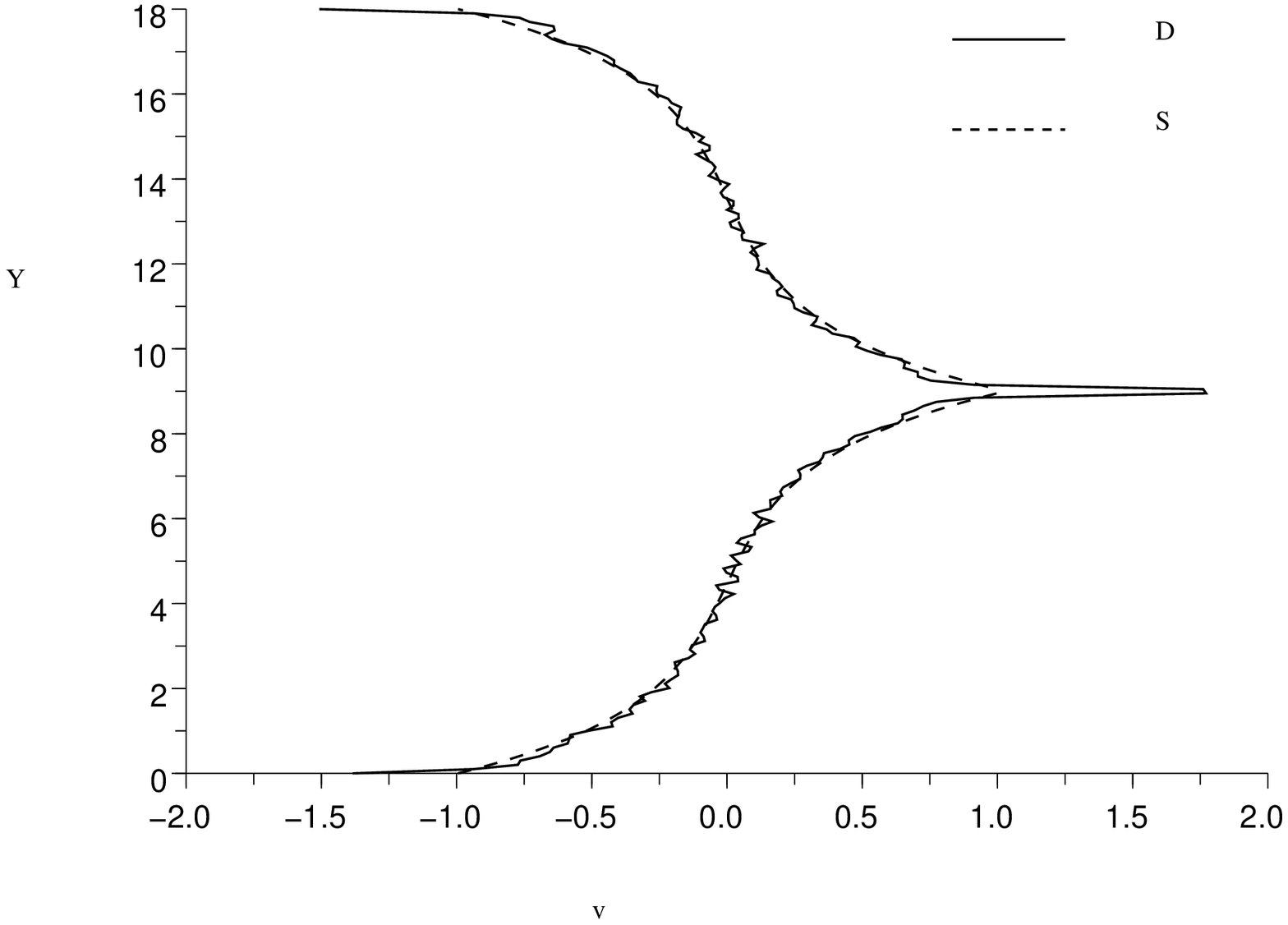}}
\end{center}
\caption{Velocity profile and off diagonal component of the stress tensor 
for the piecewise constant force.}
\label{fig:vel07}
\end{figure} 

\subsubsection{Dependence and asymptotics in $\gamma_y$}
\label{sec:limite_num_gamma_y}

We first verify numerically that the velocity profiles 
converge to some limiting profile, and in fact that $U_1$
converges to some limiting value $U_1^\infty$ (see Figure~\ref{fig:limgy}).
We estimated $U_1^\infty$ by long simulations with 
$\alpha_y = 0$ in~\eqref{eq:numerical_scheme} 
(which amounts amounts to formally setting $\gamma_y$ to $+\infty$), and
computed the distance $|U_1^{\gamma_y}-U_1^\infty|$ as a function of
$\gamma_y$. A least square fit on the last computed values 
(in log-log scale) gives $|U_1^{\gamma_y}-U_1^\infty| \sim
\gamma_y^{-2.6}$. 

\begin{figure}[h]
  \psfrag{x}{$\gamma_y$}
  \psfrag{y}{$\left|U_1^{\gamma_y}-U_1^{\infty}\right|$}
\centerline{\includegraphics[width=0.7\linewidth]{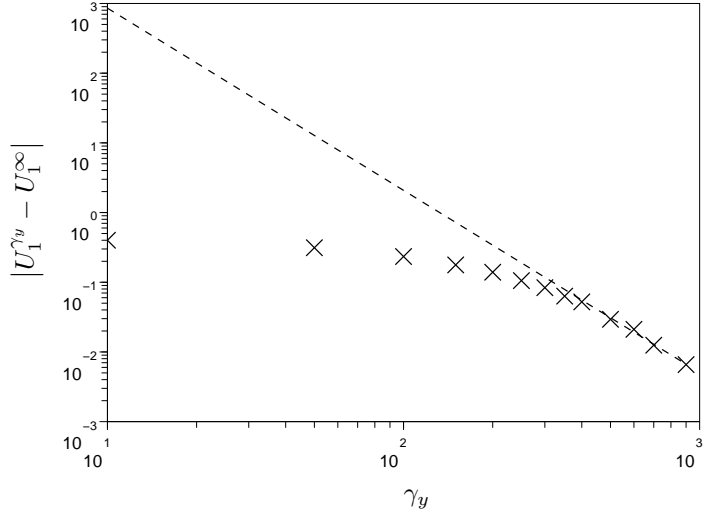}}
\caption{Convergence of the velocity profile for increasing values of the transverse
  friction~$\gamma_y$. The dashed line represents an affine fit in $\log$-$\log$ scale.}
\label{fig:limgy}
\end{figure}

We present in Figure~\ref{fig:viscGY} the dependence 
of the viscosity on the friction parameter~$\gamma_y$,
for a fixed value $\gamma_x = 1$. A mild dependence on $\gamma_y$ is observed in the limit
$\gamma_y \to 0$. The value obtained for $\gamma_y \to +\infty$ is on the other hand very different
from the limit obtained as $\gamma_y \to 0$. Note also that the viscosity seems to be an increasing
function of the transverse friction, which makes sense from a physical viewpoint.

\begin{figure}[h]
  \psfrag{x}{$\gamma_y$}
  \psfrag{y}{$\eta$}
\centerline{
\includegraphics[width=0.7\linewidth]{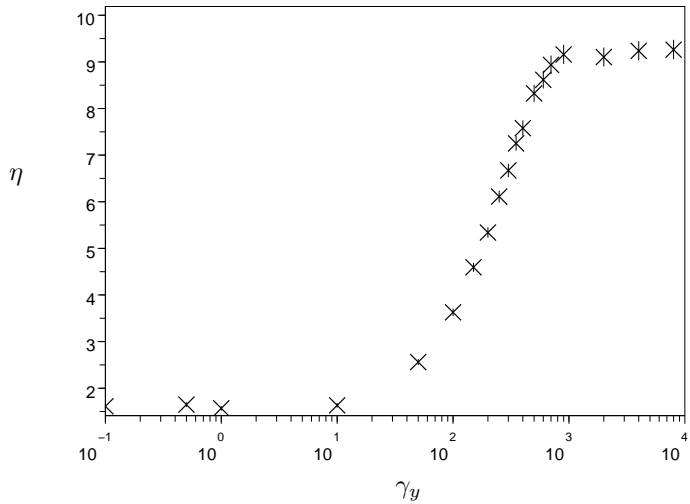}}
\caption{Shear viscosity $\eta$ as function of $\gamma_y$ in the case $\gamma_x=1$, for the sinusoidal nongradient force. 
}
\label{fig:viscGY}
\end{figure}

\subsubsection{Dependence and asymptotics in $\gamma_x$}
\label{sec:limite_num_gamma_x}

The discussion after Theorem~\ref{prop:gamma_x} suggests that the longitudinal 
velocity decreases as $1/\gamma_x$ as $\gamma_x \to +\infty$.
To observe numerically this behavior, it is necessary to increase the
magnitude of the nongradient force. Otherwise, the response is very
small (indeed, proportional to $\gamma_x^{-1}$) and relative statistical errors
are too large to obtain meaningful results.
We therefore computed the linear response
of the velocity for values of $\xi$ proportional to $\gamma_x$. This is
done by modifying the evolution on $p_x$ in~\eqref{eq:langevinheq} as follows:
\[
d p_{xi,t} = -\nabla_{q_{xi}} V(q_t) \, dt 
- \gamma_x \left( \frac{p_{xi,t}}{m} - \overline{\xi} F(q_{yi,t}) \right) 
dt + \sqrt{\frac{2\gamma_x}{\beta}} \, dW^{xi}_t.
\]
This amounts to replacing $\xi$ in~\eqref{eq:langevinheq} by
$\xi = \gamma_x \overline{\xi}$.
The resulting average velocity profile is $\gamma_x u_x$.

The results depicted in Figure~\ref{fig:limgx} show that the average
velocity, properly rescaled by $\gamma_x$, converges to the nongradient force,
as predicted by~\eqref{eq:limit_velocity_gamma_x}. 
The estimated convergence rate is 
$|\gamma_x U_1^{\gamma_x}-F_1| \sim \gamma_x^{-0.9}$.

The behavior of the corresponding viscosities cannot be predicted from the results
of Theorem~\ref{prop:gamma_x} (see the discussion at the end of 
Section~\ref{sec:infinite_gamma_x}). 
We therefore investigated numerically
this dependence, see Figure~\ref{fig:gammax}. The viscosity is more or
less constant for low values of $\gamma_x$, and increases for larger ones.

\begin{figure}[h]
  \psfrag{x}{$\gamma_x$}
  \psfrag{z}{\hspace{-0.2cm}$\gamma_x U_1$}
  \psfrag{y}{$\left|\gamma_x U_1^{\gamma_x}-F_1\right|$}
\centerline{
\includegraphics[width=0.5\linewidth]{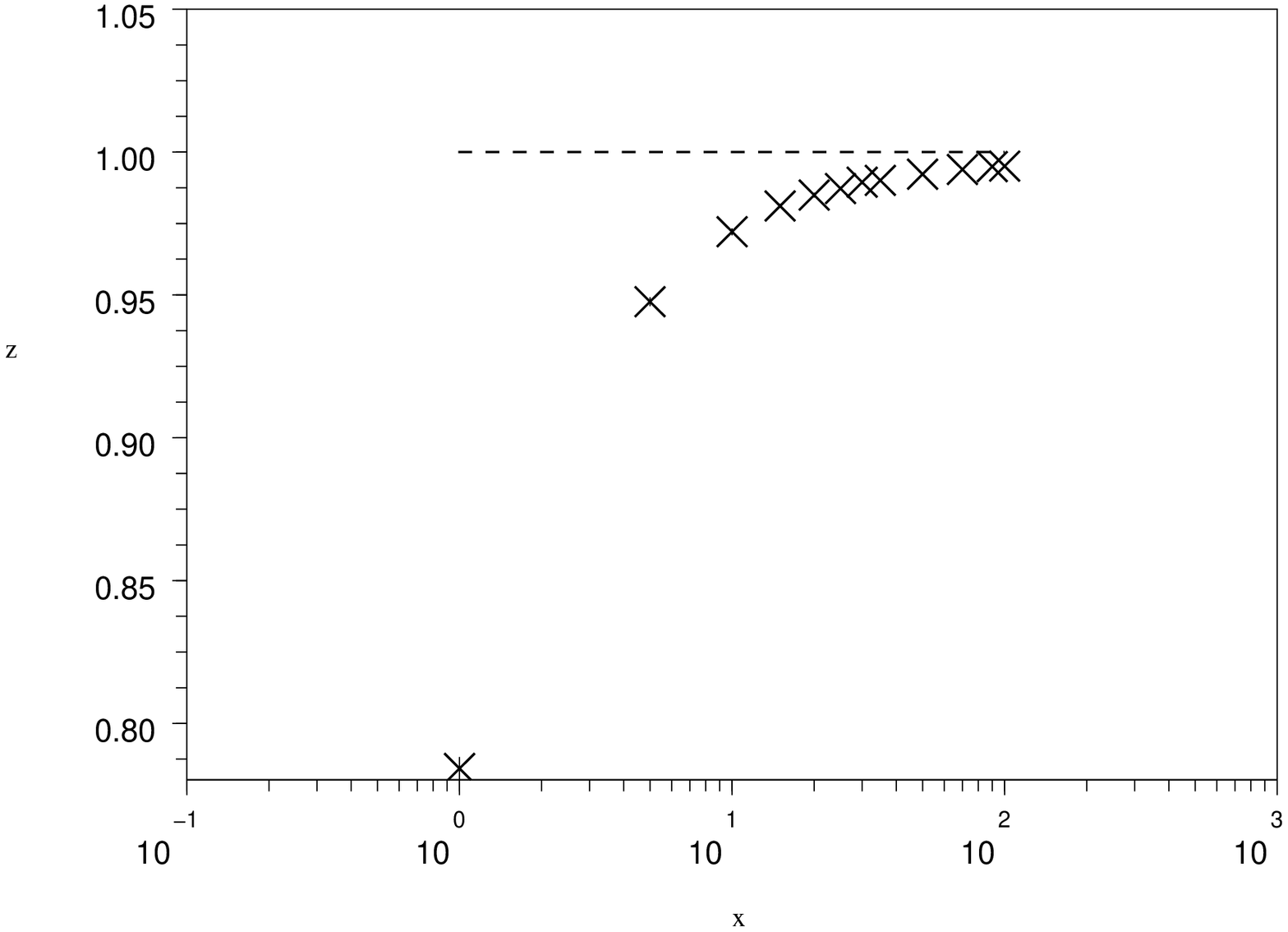}
\hspace{0.1cm}
\includegraphics[width=0.5\linewidth]{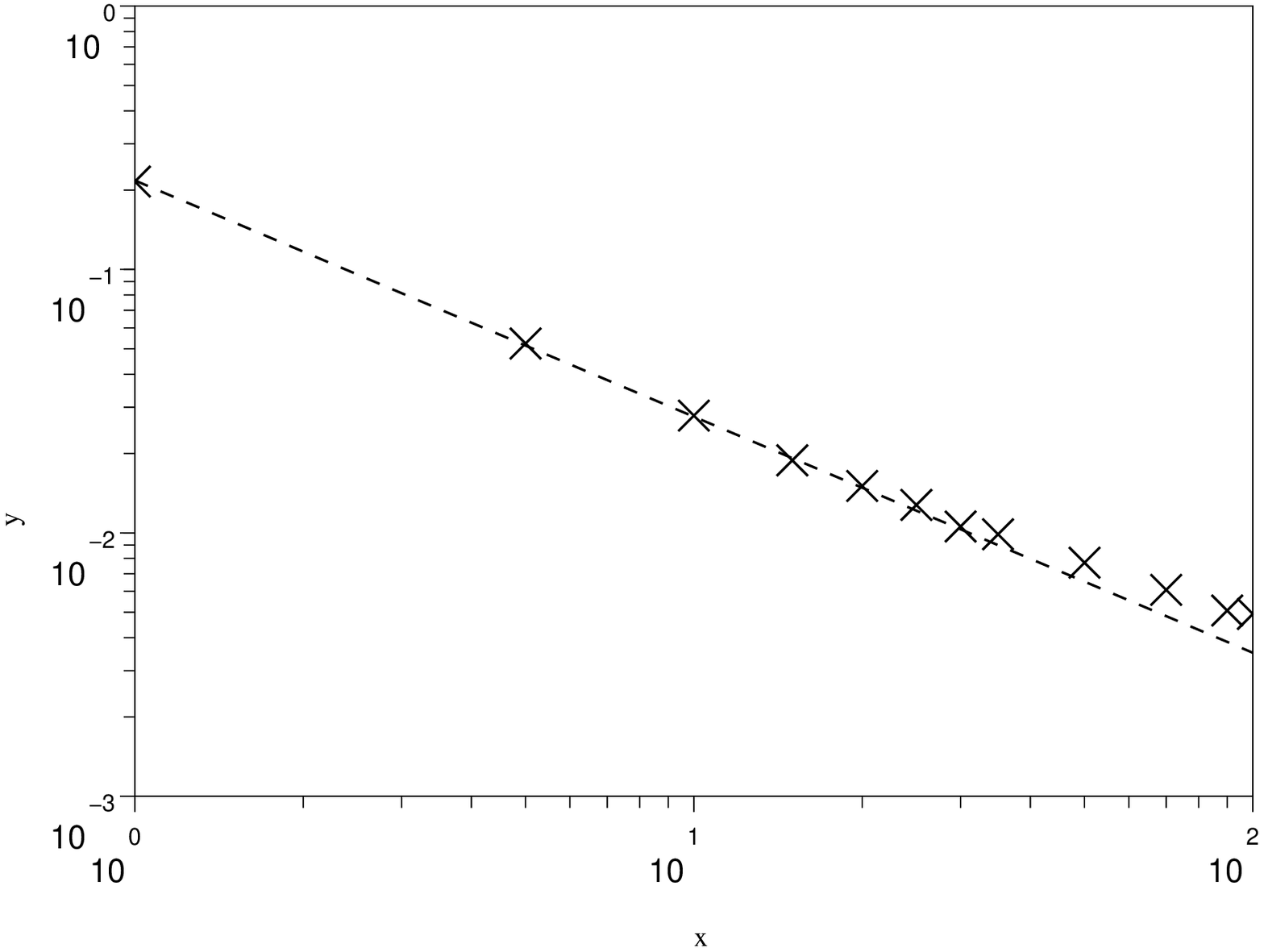}
}
\caption{Convergence of the velocity profile for increasing values of the friction~$\gamma_x$.
  The dashed line on the right picture represents an affine fit in $\log$-$\log$ scale.}
\label{fig:limgx}
\end{figure}

\begin{figure}[h]
  \psfrag{x}{$\gamma_x$}
  \psfrag{y}{$\eta$}
\centerline{
\includegraphics[width=0.5\linewidth]{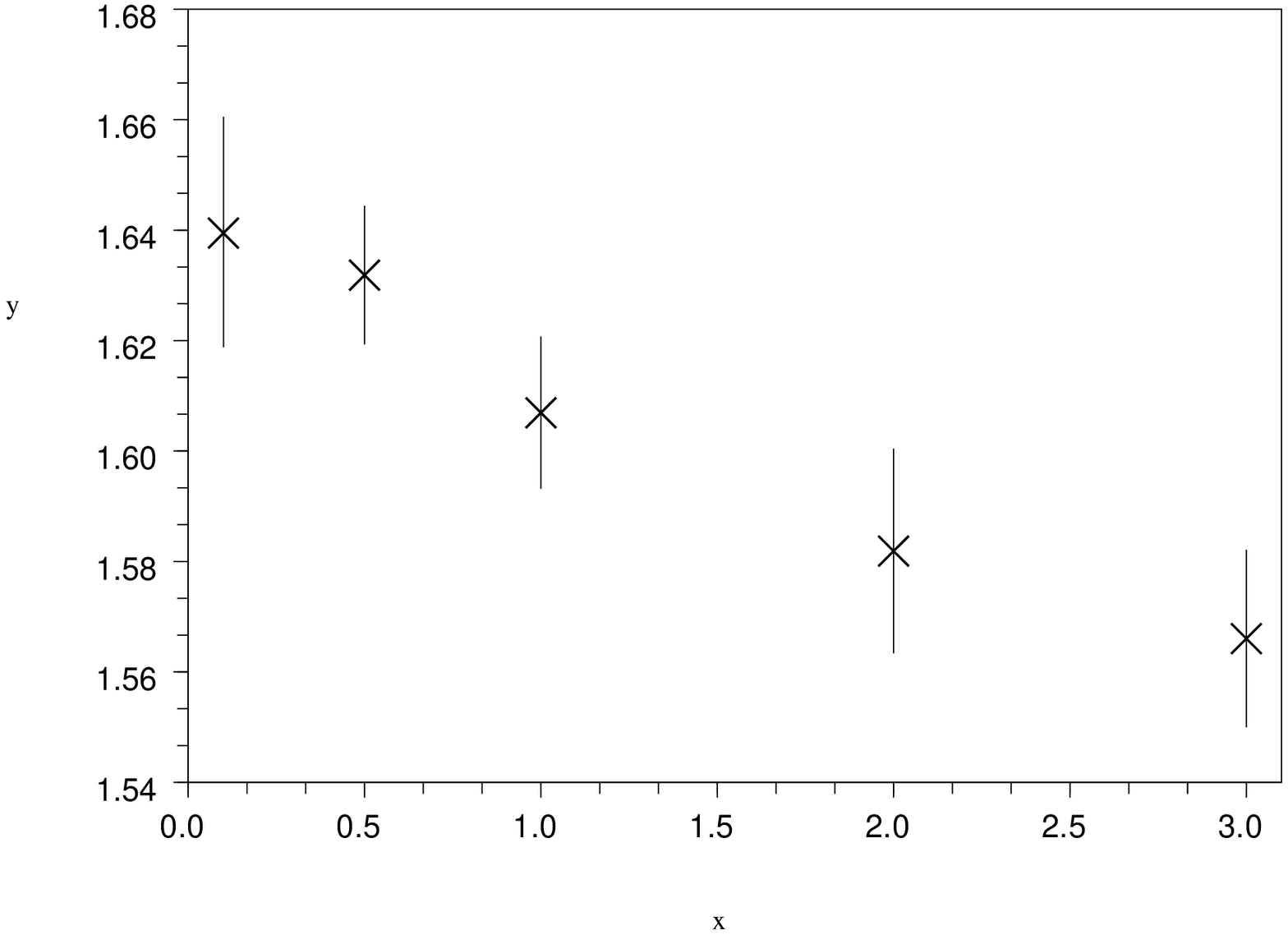}
\includegraphics[width=0.5\linewidth]{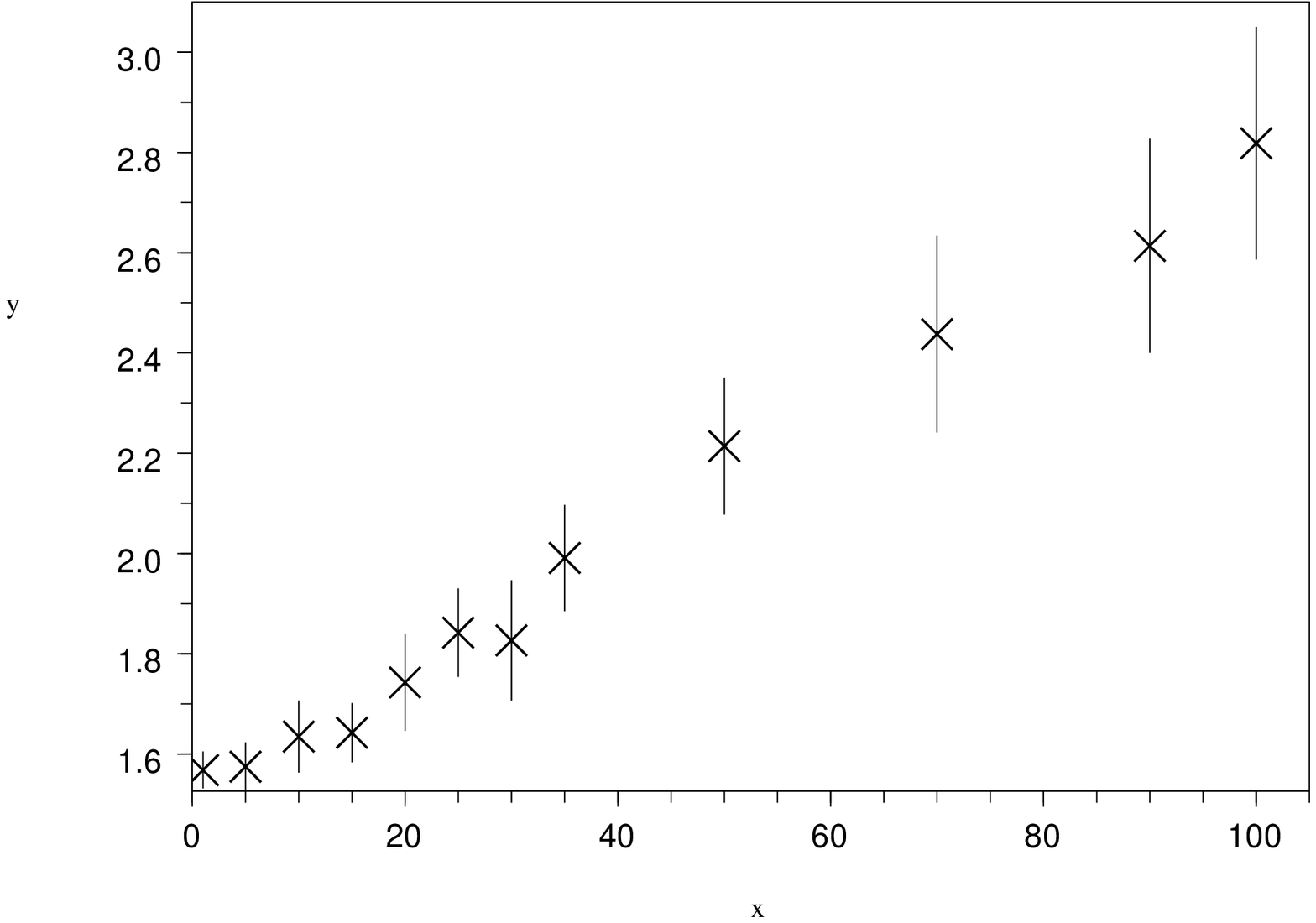}}
\caption{Shear viscosity $\eta$ as function of $\gamma_x$ in the case
  $\gamma_y=1$, for the sinusoidal nongradient force. Left: behavior for small 
  values of $\gamma_x$. Right: large $\gamma_x$ asymptotics.}
\label{fig:gammax}
\end{figure}


\section{Proof of the results}
\label{sec:proofs}

Unless otherwise stated, the norm $\| \cdot \|$ refers to the norm induced by the 
canonical scalar product on~$L^2(\psi_0)$.
The operator $\A_{\alpha,\rm thm}$ ($\alpha = x,y$), defined in~\eqref{eq:def_A_alpha},
can be rewritten as
\[
\A_{\alpha,\rm thm} = - \frac1\beta \sum_{i=1}^N \left(\partial_{p_{\alpha i}}\right)^* 
\partial_{p_{\alpha i}}.
\]
Note also that 
\[
\left[\partial_{p_{\alpha i}},\A_{\rm ham}\right] = \frac1m \partial_{q_{\alpha i}},
\]
where $[A,B] = AB-BA$ is the commutator of two operators.

\subsection{Proof of Theorem~\ref{theo1}}
\label{sec:proof_psi_xi}

The existence and the uniqueness of the invariant measure which has a smooth density with 
respect to the Lebesgue measure for any $\xi \in \mathbb{R}$
is a standard result since the position space is compact and the forces 
are smooth. It suffices to use hypoellipticity arguments and take 
the kinetic energy as a Lyapunov function (see for 
instance~\cite{rey-bellet} for the general strategy, and \cite[Appendix~A]{pavliotis-stuart}
for the specific case under consideration). 
As a consequence, and recalling the definitions of the operators given in 
Section~\ref{sec:linear_response},
\[
\mathrm{Ker}(\L_0) = \mathrm{Span}(1) = \left\{ c \psi_0, c \in \mathbb{R} \right\}, 
\]
the vector space of constant functions on~$L^2(\psi_0)$. 
Note also that $\ker\left(\A_0\right) = \mathrm{Span}(1)$ 
by~\cite[Proposition~15]{Villani}.

The key result to prove the expansion~\eqref{eq:expansion_psi_xi}
is the following lemma (proved below).

\begin{lemm}
  \label{lem:bounded_op}
  The operators $\L_0^{-1}$ and $\L_0^{-1}\B^*$ are bounded operator on
  $\mathcal{H}$ (endowed with the $L^2(\psi_0)$ scalar product).
\end{lemm}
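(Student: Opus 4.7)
The plan is to reduce the boundedness of both $\L_0^{-1}$ and $\L_0^{-1}\B^*$ to a spectral gap statement for the equilibrium generator $\A_0$ on~$\mathcal{H}$. This spectral gap is a hypocoercivity result in the spirit of~\cite{Villani}: although $\A_0$ is not coercive in $L^2(\psi_0)$ (it has a degenerate dissipation acting only in the momentum variable), the commutator $[\A_{\rm ham}, \partial_{p_{\alpha i}}] = m^{-1}\partial_{q_{\alpha i}}$ transfers dissipation from momenta to positions, which is enough to produce a genuine spectral gap on~$\mathcal{H}$.

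For the first claim, I would invoke the hypocoercivity theorem for Langevin-type generators from~\cite{Villani}, whose hypotheses are readily verified in our setting: (i)~the H\"ormander bracket condition given by the commutator above, and (ii)~a Poincar\'e inequality for $\psi_0$, which holds here because $\psi_0$ factorizes as the product of a smooth positive density $\propto \e^{-\beta V}$ on the compact torus $\D^N$ and a Gaussian in~$p$. This yields a constant $\lambda>0$ such that $\|\e^{t\A_0}h\|_{L^2(\psi_0)} \le C\e^{-\lambda t}\|h\|_{L^2(\psi_0)}$ for every $h \in \mathcal{H}$, hence the bounded invertibility of $\A_0$ on~$\mathcal{H}$. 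Taking the $L^2(\psi_0)$-adjoint, $\L_0^{-1} = (\A_0^{-1})^*$ is bounded on~$\mathcal{H}$ with the same operator norm.

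For the second claim, rather than try to control $\B^*$ directly (which contains the unbounded multiplication by $p_{xi}F(q_{yi})$), I would use a duality trick: since $(\L_0^{-1}\B^*)^* = \B\A_0^{-1}$ and the operator $\B = \sum_i F(q_{yi})\partial_{p_{xi}}$ involves only bounded smooth coefficients times momentum derivatives, it suffices to prove that each $\partial_{p_{xi}}\A_0^{-1}$ is bounded on~$\mathcal{H}$. This follows from the antisymmetric/symmetric decomposition of~$\A_0$: for $h\in\mathcal{H}$ and $f=\A_0^{-1}h$, the antisymmetry $\A_{\rm ham}^*=-\A_{\rm ham}$ and the expression of $\A_{\rm thm}$ as a weighted sum of $-(\partial_{p_{\alpha i}})^*\partial_{p_{\alpha i}}$ give
\[
\langle h,f\rangle_{L^2(\psi_0)} = \langle \A_0 f, f\rangle_{L^2(\psi_0)} = \langle \A_{\rm thm}f,f\rangle_{L^2(\psi_0)} = -\frac{1}{\beta}\sum_{\alpha,i}\gamma_\alpha\|\partial_{p_{\alpha i}}f\|^2.
\]
Combining this identity with $\|f\| \le C\|h\|$ from the first step and the Cauchy--Schwarz inequality yields $\|\partial_{p_{xi}}\A_0^{-1}h\| \le C'\|h\|$. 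The boundedness of $\B\A_0^{-1}$, and hence by duality that of $\L_0^{-1}\B^*$, follows immediately.

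The main obstacle is clearly the spectral gap estimate underlying the first step: all the quantitative work is hidden in the hypocoercivity theorem, whereas the rest of the argument is an elementary energy identity exploiting the skew/symmetric splitting of~$\A_0$. Since the setting here (compact position space, smooth potential on the torus, strictly positive frictions $\gamma_x,\gamma_y>0$) fits exactly into the framework of~\cite{Villani}, I would simply verify the hypotheses of Villani's theorem rather than reprove it from scratch.
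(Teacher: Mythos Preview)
Your proposal is correct and follows essentially the same route as the paper: invoke hypocoercivity to get boundedness of $\A_0^{-1}$ (hence of $\L_0^{-1}$ by duality), then bound $\B\A_0^{-1}$ via the energy identity $\langle \A_0 f,f\rangle = -\beta^{-1}\sum_\alpha \gamma_\alpha\|\nabla_{p_\alpha} f\|^2$, and conclude by taking the adjoint. The only small point the paper makes explicit and you do not is that $\mathrm{Ran}(\B)\subset\mathcal{H}$, so that $\B\A_0^{-1}$ indeed maps $\mathcal{H}$ to $\mathcal{H}$ and the duality argument goes through on the right space.
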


In view of this result, we can introduce 
\[
\xi^* = \left\| \L_0^{-1}\B^*|_\mathcal{H}\right\|^{-1}, 
\]
and define, for $k\geq 1$,
\[
\mathrm{f}_{k+1} = - \L_0^{-1}\B^* \mathrm{f}_k,
\]
with
\begin{equation}
  \label{eq:L0f1}
  \mathrm{f}_1 = -\L_0^{-1}\left(\B^* 1\right) = -\frac{\beta}{m} 
  \L_0^{-1}\left(\sum_{i=1}^N p_{xi} F(q_{yi})\right), 
\end{equation}
which is well-defined since the function $(q,p) \mapsto p_{xi} F(q_{yi})$
belongs to~$\mathcal{H}$ for all $i=1,\dots,N$.
The function $f_\xi$ in~\eqref{eq:expansion_psi_xi} 
is well defined for $|\xi| < \xi^*$ and 
a straightforward computation shows that 
\[
\L_{\xi} f_\xi = 0.
\]
The uniqueness of the invariant measure allows to conclude.

We now write the
 
\emph{Proof of Lemma~\ref{lem:bounded_op}.}
  We denote by $\| \cdot \|$ the $L^2(\psi_0)$-norm.
  Standard results of hypocoercivity show that $\A_0^{-1}$ is bounded on~$\mathcal{H}$ 
  (and in fact compact, see~\cite{Talay02,EH02,HN04,OttobrePhD}).
  Besides, for a smooth test function $\varphi$,
  \[
  \| \B \varphi \| \leq \| F \|_{L^\infty} \sum_{i=1}^N \| \partial_{p_{xi}} \varphi\|,
  \]
  while 
  \[
  \left\langle \varphi, \A_0 \varphi \right\rangle = - \frac1\beta \left( \gamma_x 
  \| \nabla_{p_x} \varphi \|^2 + \gamma_y \| \nabla_{p_y} \varphi \|^2\right).
  \]
  This shows that there exists a constant $C > 0$ such that, for any smooth test function 
  $\varphi$, 
  \[
  \| \B \varphi \|^2 \leq C \left|\left\langle \varphi, \A_0 \varphi \right\rangle\right|
  \leq C \| \varphi \| \, \|\A_0\varphi\|.
  \]
  In conclusion, for any $\varphi \in \mathcal{H}$, 
  \[
  \| \B \A_0^{-1}\varphi \|^2 \leq C \|\A_0^{-1}\varphi\| \, \| \varphi \|.
  \]
  Since $\mathrm{Ran}(\B) \subset \mathcal{H}$, this
  shows that $\B \A_0^{-1}$ is a bounded operator on $\mathcal{H}$.
  The same holds true for $\mathcal{L}_0^{-1} \B^* |_{\mathcal H}$, which is its adjoint 
  on $\mathcal{H} \subset L^2(\psi_0)$. \qquad\endproof

\begin{rem}
  \label{rmk:gamma=0}
  In the above proof, the fact that both $\gamma_x$ and $\gamma_y$ 
  are non-zero is a crucial assumption. 
  If for instance $\gamma_x = 0$, many arguments break down,
  and the proofs become much more technical and/or some results cannot be proved anymore. 
  This is due to the fact that the 
  the Lie algebra generated by $\{\A_{\rm ham},\partial_{p_{y1}},\dots,\partial_{p_{yN}}\}$ 
  may be different from
  the Lie algebra generated by $\{\A_{\rm ham},\partial_{p_{x1}},\dots,\partial_{p_{xN}},
  \partial_{p_{y1}},\dots,\partial_{p_{yN}}\}$. Indeed,
  \[
  [\A_{\rm ham},\partial_{p_{yi}}] = \partial_{q_{yi}},
  \qquad
  [\A_{\rm ham},\partial_{q_{yi}}] = \sum_{j=1}^N \partial_{q_{yi},q_{yj}}^2 V 
  \cdot \partial_{p_{yj}} + \sum_{j=1}^N \partial_{q_{yi},q_{xj}}^2 V \cdot \partial_{p_{xj}}.
  \]
  Possibly, iterated commutators should be computed as well. 
  Additional assumptions on the potential are required to infer that $\partial_{p_{xi}}$
  is in the Lie algebra. 
  This amounts to assuming that the coupling between the $x$ and the $y$
  directions is strong enough. 
  To our knowledge, the only cases where such arguments could be used
  are one-dimensional atom chains, for which the simple
  geometric structure of the system is of paramount importance
  to show that the Lie algebra has full rank. Obtaining hypocoercivity estimates
  is more challenging and imposes further restrictions on the interactions
  (see~\cite{EPR99}).
\end{rem}

\subsection{Proof of Proposition~\ref{prop:local_conservation}}

Corollary~\ref{replin} shows that
\[
\begin{aligned}
\lim_{\xi \to 0} \frac{\left\langle \A_0 U_x^\varepsilon(Y,\cdot)\right\rangle_\xi}{\xi}
& = -\frac{\beta}{m} \left\langle U_x^\varepsilon(Y,q,p), \sum_{i=1}^N p_{xi} F(q_{yi}) 
\right\rangle_{L^2(\psi_0)} \\
& = -\frac{L_y}{mN} \sum_{i=1}^N 
\int_{\D^N} \chi_\varepsilon(q_{yi}-Y) F(q_{yi}) \, \overline{\psi}_0(q) \, dq,
\end{aligned}
\]
where $\overline{\psi}_0(q) \, dq = Z_q^{-1} \mathrm{e}^{-\beta V(q)} \, dq$ 
is the marginal of the canonical measure in the 
$q$~variable.
The integrand in the last equation depends only on one variable $q_{yi}$. By translation
invariance of the system, the marginal distribution in the $q_{yi}$ 
variable of $\overline{\psi}_0(q) \, dq$ is the uniform distribution on $L_y \mathbb{T}$.
Therefore, for any $i=1,\dots,N$,
\[
\int_{\D^N} \chi_\varepsilon(q_{yi}-Y) F(q_{yi}) \, \overline{\psi}_0(q) \, dq
= \frac{1}{L_y} \int_0^{L_y} \chi_\varepsilon(y-Y) F(y) \, dy \longrightarrow F(Y)
\]
as $\varepsilon \to 0$, so that
\begin{equation}
  \label{eq:direct_limit}
  \lim_{\varepsilon \to 0}
  \lim_{\xi \to 0} \frac{\left\langle \A_0 U_x^\varepsilon(Y,\cdot)\right\rangle_\xi}{\xi}
  = -\frac{1}{m} F(Y).
\end{equation}

Now, a simple computation shows that
\[ 
\begin{aligned}
\A_0 U_x^\varepsilon(Y,q,p) & =
\frac{1}{\rho L_x} \left( \sum_{i=1}^N \frac{p_{xi}p_{y_i}}{m}\partial_{q_{y_i}}
\chi'_{\varepsilon}\left(q_{y_i}-Y\right) 
 -\chi_{\varepsilon}\left(q_{y_i}-Y\right)\partial_{q_{xi}} V(q) \right) \\
& \ \ - \frac{\gamma_x}{m} U_x^\varepsilon(Y,q,p).
\end{aligned}
\]
The sum on the right-hand side can be decomposed into two contributions,
one proportional to the kinetic part of the off-diagonal part of the stress tensor,
and the other one arising solely from interaction forces.
The first contribution can be written as 
\[
-\frac{\partial\Sigma^\varepsilon_{xy,\rm kin}(Y,q,p)}{\partial Y} = 
-\frac{d}{dY}\left ( \frac{1}{\rho L_x} \sum_{i=1}^N \frac{p_{xi}p_{y_i}}{m} 
\chi_{\varepsilon}\left(q_{y_i}-Y\right) \right).
\]
For the second part, we first use the pairwise character of the interactions
to write
\[
\partial_{q_{xi}} V(q) = \sum_{i \neq j} 
\mathcal{V}'(|q_i-q_j|)\frac{ q_{xi}-q_{xj}}{|q_i-q_j|},
\]
and then symmetrize the resulting expression as 
\begin{align*}
  \sum_{i=1}^N \chi_{\varepsilon}\left(q_{y_i}-Y\right)\partial_{q_{xi}} V(q) &= 
  \sum_{i\neq j} \chi_{\varepsilon}\left(q_{y_i}-Y\right)\mathcal{V}'(|q_i-q_j|)
  \frac{ q_{xi}-q_{xj}}{|q_i-q_j|}, \\ 
  & = \! \! \! \sum_{1 \leq i < j \leq N} \big(\chi_{\varepsilon}\left(q_{yi}-Y\right)-
  \chi_{\varepsilon}\left(q_{yj}-Y\right)\big)
  \mathcal{V}'(|q_i-q_j|) \frac{ q_{xi}-q_{xj}}{|q_i-q_j|}.
\end{align*}
The second contribution finally reads
\[
-\frac{\partial\Sigma^\varepsilon_{xy,\rm pot}(Y,q,p)}{\partial Y} = 
\frac{d}{dY}\left( \frac{1}{\rho L_x} \sum_{1 \leq i < j \leq N} 
\mathcal{V}'(|q_i-q_j|)\left(\frac{ q_{xi}-q_{xj}}{|q_i-q_j|}\right)
\int_{q_{yj}}^{q_{yi}} \chi_{\varepsilon}(s-Y) \, ds \right).
\]
In conclusion, it holds
\begin{equation}
  \label{eq:A0_U}
  \A_0 U_x^\varepsilon(Y,q,p) = -\frac{1}{\rho} 
  \frac{\partial \Sigma^\varepsilon_{xy}(Y,q,p)}{\partial Y} 
  - \frac{\gamma_x}{m} U_x^\varepsilon(Y,q,p).
\end{equation}

Combining the latter result and~\eqref{eq:direct_limit} leads to
\begin{equation}
  \label{eq:final_eq_local_conservation}
  \lim_{\varepsilon \to 0}
  \lim_{\xi \to 0} \frac{1}{\xi} \left( 
  \frac1\rho \frac{\partial \left\langle\Sigma^\varepsilon_{xy}(Y,\cdot)\right\rangle_\xi}
             {\partial Y}
  + \frac{\gamma_x}{m} \left\langle U_x^\varepsilon(Y,\cdot) \right\rangle_\xi 
\right) = \frac{1}{m} F(Y).
\end{equation}
Now, Corollary~\ref{replin} shows that the limit 
\[
u_x^\varepsilon(Y) := \lim_{\xi \to 0} \frac{\left\langle U_x^\varepsilon(Y,\cdot) \right\rangle_\xi}
{\xi} = \frac{L_y}{Nm} \sum_{i=1}^N 
\int_{\D^N \times \R^{2N}} p_{xi} \chi_\varepsilon(q_{yi}-Y) \mathrm{f}_1(q,p) 
\psi_0(q,p) \, dq \, dp
\]
is well defined. 
By hypoellipticity, the function $\mathrm{f}_1$ belongs to $C^\infty(\D^N \times \mathbb{R}^{2N})$.
The limit $\varepsilon \to 0$ of the right-hand side is therefore well defined
and
\[
u_x(Y) = \frac{L_y}{Nm} \sum_{i=1}^N 
\int_{\D^{N-1}\times L_x\mathbb{T} \times \R^{2N}} 
p_{xi} (f^1\psi_0)(q_1,\dots,q_{i-1},q_{xi},Y,q_{i+1},\dots,q_N,p) \, dq_{1:i:N} \, dq_{xi} \, dp,
\]
where $dq_{1:i:N} = dq_1\dots dq_{i-1} \, dq_{i+1} \dots dq_N$.
A similar reasoning holds for $\sigma_{xy}$. Passing to the limit 
in~\eqref{eq:final_eq_local_conservation},
\[
\frac1\rho \frac{\partial \sigma_{xy}(Y)}{\partial Y}
+ \frac{\gamma_x}{m} u_x(Y)
= \frac{1}{m} F(Y),
\]
which is~\eqref{eq:local_conservation}.

\subsection{Proof of Theorem~\ref{prop:gamma_y}}
\label{sec:proof_gamma_y}

To simplify the notation, we set $m = 1$ in this section, but the proof can 
be straightforwardly modified to account for more general masses.
Note first that the solution of~\eqref{eq:eq_Poisson_y} is well defined
for any $\gamma_y > 0$ by the Fredholm alternative (since $\A_{0}(\gamma_y)$ has a compact
resolvent on $\mathcal{H} = L^2(\psi_0) \cap \{ 1 \}^\perp$,
and the right-hand side of the equation is orthogonal to $\mathrm{Vect}(1)$).

We start by formal computations providing possible expressions of $f^0,f^1$,
and then prove rigorously the convergence result stated in Theorem~\ref{prop:gamma_y}.
To this end, we need some intermediate uniform hypocoercivity result.

\paragraph{Formal asymptotic expansion in $\gamma_y$}
We consider the following \emph{ansatz} for the solution~$f_{\gamma_y}$:
\[
f_{\gamma_y} = f^0 + \frac{1}{\gamma_y} f^1 + \frac{1}{\gamma_y^2} f^2 + \cdots
\]
and rewrite the operator $\A_0(\gamma_y)$ as the sum 
$\A_0(\gamma_y) = T_0 + \gamma_y \A_{y, \rm thm}$.
The kernel of the operator $\A_{y, \rm thm}$ on $L^2(\psi_0)$ is
\[
\ker(\A_{y, \rm thm}) = \Big\{ g \in L^2(\psi_0) \ \Big| \ g = g(q,p_x) \Big\}.
\]
This is a consequence of the equality
\[
\left \langle g, \A_{y, \rm thm}g \right\rangle_{L^2(\psi_0)} 
= -\frac1\beta\| \nabla_{p_y} g \|^2
\]
and the fact that the Gaussian measure (in the $p_y$ variable) satisfies
a Poincar\'e inequality. 
Identifying terms with the same powers of $\gamma_y$ 
in~\eqref{eq:eq_Poisson_y}, the following hierarchy is obtained:
\begin{equation}
  \label{eq:hierarchy_y}
  \left \{ \begin{aligned}
    \A_{y, \rm thm}f^0 & = 0, \\
    T_0 f^0 + \A_{y, \rm thm} f^1 & = -\sum_{i=1}^N p_{xi} G(q_{yi}), \\
     T_0 f^1 + \A_{y, \rm thm} f^2 & = 0.
  \end{aligned} \right.
\end{equation}
The first equation shows that $f^0 \equiv f^0(q,p_x)$. 
The second equation can then be rewritten as
\[
\A_{y, \rm thm} f^1(q,p) =  -p_y \cdot \nabla_{q_y} f^0(q,p_x) - \sum_{i=1}^N p_{xi} G(q_{yi}) - 
\mathcal{T}_{q_y}f^0(q,p_x).  
\]
where 
\[
\mathcal{T}_{q_y} = p_x \cdot \nabla_{q_x} - \nabla_{q_x}V(q_x,q_y) \cdot \nabla_{p_x} 
+ \gamma_x \A_{x, \rm thm}
\]
is an operator parameterized by $q_y \in (L_y \mathbb{T})^N$, 
and acting on the Hilbert space $L^2(\Psi_{q_y})$, where
\[
\Psi_{q_y}(q_x,p_x) = Z_{q_y}^{-1} \exp\left(-\beta \left( V(q_x,q_y) + \frac{p_x^2}{2m}\right) 
\right).
\]
Setting
\[
f^1 = \widetilde{f}^1 + p_y \cdot \nabla_{q_y} f^0, 
\]
it holds
\[
\A_{y, \rm thm} \widetilde{f}^1(q,p) = -\sum_{i=1}^N p_{xi} G(q_{yi}) - 
\mathcal{T}_{q_y}f^0(q,p_x).
\]
Since the right-hand side does not depend on $p_y$, 
the solvability condition for this equation is that the right-hand side vanishes.
Besides, by the same results which allow to prove that $\A_0$ has compact resolvent, the operator~$\mathcal{T}_{q_y}$, considered as an operator on $L^2(\Psi_{q_y})$, has a compact resolvent on  $\ker(\mathcal{T}_{q_y})^\perp = \{ 1 \}^\perp$ (where the orthogonality is with respect to the canonical scalar product on~$L^2(\Psi_{q_y})$; see~\cite[Proposition~15]{Villani} for a proof of the latter equality). In addition, by hypocoercivity, $\mathcal{T}_{q_y}^{-1}$ is bounded on $H^1(\Psi_{q_y}) \cap \ker(\mathcal{T}_{q_y})^\perp$. Therefore, $\mathcal{T}_{q_y}^{-1}(p_{xi})$ is well defined by the Fredholm alternative. By linearity,
\[
f^0(q,p) = -\sum_{i=1}^N G (q_{yi})\mathcal{T}_{q_y}^{-1}(p_{xi}),
\]
and
\[
f^1(q,p) = p_y \cdot \nabla_{q_y} f^0(q,p) + \widetilde{f}^1(q,p_x), 
\]
provide admissible solutions for the first two levels of the 
hierarchy~\eqref{eq:hierarchy_y}. The function~$\widetilde{f}^1$ will be made precise below (see~\eqref{eq:def_f1_tilde}).

The function $f^0$ is in $H^1(\psi_0)$. 
To show that the function $f^1-\widetilde{f}^1$ is indeed well defined, 
it is enough to show that $\partial_{q_{yi}} \left[ \mathcal{T}_{q_y}^{-1}(p_{xk})\right]$
is well defined. This, in turn, follows from the following equality
for any function $\varphi = \varphi(q_x,p_x)$:
\begin{equation}
  \label{eq:deriver_T_moins_un}
  \partial_{q_{yi}} \left(\mathcal{T}_{q_y}^{-1}\right) \varphi
  = \left(\mathcal{T}_{q_y}^{-1}\right) \left[
    \sum_{j=1}^N \partial^2_{q_{yi},q_{xj}} V(q_x,q_y) \partial_{p_xj}
    \right]\left(\mathcal{T}_{q_y}^{-1}\right) \varphi.
\end{equation}
The operators $\partial_{p_{x j}}\left(\mathcal{T}_{q_y}^{-1}\right)$
are bounded on $L^2(\Psi_{q_y})\cap \{1\}^\perp$ for $j=1,\dots,N$ since
\[
\left\langle\varphi,\mathcal{T}_{q_y}\varphi\right\rangle_{L^2(\Psi_{q_y})} 
= -\frac{\gamma_x}{\beta}\| \nabla_{p_x}\varphi\|_{L^2(\Psi_{q_y})}^2,
\]
which implies, for $\| \varphi \|_{L^2(\Psi_{q_y})} \leq 1$,
\[
\left\| \nabla_{p_x} \left(\mathcal{T}_{q_y}^{-1}\right)\varphi\right\|_{L^2(\Psi_{q_y})}^2
\leq \frac{\beta}{\gamma_x} \|\varphi\|_{L^2(\Psi_{q_y})} \, 
\left\| \left(\mathcal{T}_{q_y}^{-1}\right)\varphi\right\|_{L^2(\Psi_{q_y})} 
\leq \frac{\beta}{\gamma_x} \left\| \mathcal{T}_{q_y}^{-1}\right\|.
\]
In conclusion, $f^1-\widetilde{f}^1 \in H^1(\psi_0)$. In addition, by hypoellipticity, the functions $f^0,f^1-\widetilde{f}^1$ are $C^\infty$ when $G$ is smooth.

\paragraph{Uniform hypocoercivity estimates}
Let us show that the operator $\A_0(\gamma_y)$ is uniformly hypocoercive
for $\gamma_y$ large enough (say, $\gamma_y \geq \gamma_x$), provided the domain of the operator is restricted to functions with vanishing average with respect to the Gibbs measure in the $p_y$ variable. To this end, we decompose $\A_0(\gamma_y)$ as
\[
\A_0(\gamma_y) = \A_0(\gamma_x)  + (\gamma_y-\gamma_x)\A_{y, \rm thm}.
\]
Following the proof of Theorem~6.2 in~\cite{HairerPavliotis08}, 
it can be shown that there exists $\kappa > 0$
such that, for all smooth functions $u \in \mathcal{H}$,
\[
-\left\langle\left\langle u, \A_0(\gamma_x)u \right\rangle\right\rangle
\geq \kappa \left\langle\left\langle u, u \right\rangle\right\rangle,
\]
where the norm induced by 
$\left\langle\left\langle \cdot, \cdot\right\rangle\right\rangle$ is equivalent
to the $H^1(\psi_0)$ norm
\[
\| u \|^2_{H^1(\psi_0)} = \| u \|^2 + \| \nabla_p u \|^2 + \| \nabla_q u\|^2.
\]
More precisely, $\left\langle\left\langle \cdot, \cdot\right\rangle\right\rangle$
is the  bilinear form defined by
\[
\left\langle\left\langle u, v\right\rangle\right\rangle = a\left\langle u, v\right\rangle
+ b \left\langle \nabla_p u,\nabla_pv\right\rangle + \langle \nabla_p u,
\nabla_q v\rangle +  \langle \nabla_q u,
\nabla_p v\rangle + b\langle \nabla_q u,
\nabla_q v\rangle,
\]
with appropriate coefficients $a \gg b \gg 1$.
It follows that there exists $C> 0$ independent of $\gamma_y$ such that
\[
C \left\| u \right\|_{H^1(\psi_0)}^2 - (\gamma_y-\gamma_x)
\left\langle\left\langle u, \A_{y, \rm thm} u \right\rangle\right\rangle
\leq -\left\langle\left\langle u, \A_0(\gamma_y)  u \right\rangle\right\rangle.
\]
Let us now show that 
\begin{equation}
\label{eq:estimate_positiviy_A}
-\left\langle\left\langle u, \A_{y, \rm thm} u \right\rangle\right\rangle 
\geq 0
\end{equation}
for functions $u$ in an appropriate subspace of $H^1(\psi_0)$.
Using the commutation relations $[\partial_{p_{\alpha,i}}, \partial_{p_{\alpha',j}}^*] = \beta \delta_{\alpha,\alpha'}\delta_{ij}$ ($\alpha,\alpha' \in \{ x,y \}$), a simple computation shows
\[
\begin{aligned}
\left\langle\left\langle u, \sum_{i=1}^N\left(\partial_{p_{yi}}\right)^*\partial_{p_{yi}} u \right\rangle\right\rangle 
& = \sum_{i=1}^N (a+\beta b) \| \partial_{p_{y_i}} u \|^2 + b \|\nabla_p \partial_{p_{yi}} u\|^2  \\
& \quad + b \|\nabla_q \partial_{p_{yi}} u\|^2 + 2\langle \nabla_q \partial_{p_{yi}} u, \nabla_p \partial_{p_{yi}} u\rangle + \beta \langle \partial_{q_{yi}} u, \partial_{p_{yi}} u\rangle \\
& \geq \sum_{i=1}^N \left(a+\beta \left(b-\frac12\right)\right) \| \partial_{p_{y_i}} u \|^2 + (b-1) \|\nabla_p \partial_{p_{yi}} u\|^2 \\
& \qquad + (b-1) \|\nabla_q \partial_{p_{yi}} u\|^2 - \frac{\beta}{2} \| \partial_{q_{yi}} u \|^2.
\end{aligned}
\]
Summing on $i \in \{ 1,\dots,N\}$, the quantity~\eqref{eq:estimate_positiviy_A} is seen to be non-negative for an appropriate choice of constants $a \gg b \gg 1$ provided there exists a constant $A > 0$ such that, for all $i = 1,\dots,N$,
\begin{equation}
  \label{eq:condition}
  \| \partial_{q_{yi}}u \| \leq A \| \nabla_p \partial_{q_{yi}} u \|.
\end{equation}
This indeed implies
\[
\sum_{i=1}^N \| \partial_{q_{yi}} u \|^2 \leq A \sum_{i,j=1}^N \| \partial_{p_{yj}} \partial_{q_{yi}} u \|^2 = A \sum_{j=1}^N \| \nabla_{q_y} \partial_{p_{yj}}\|^2
\leq A \sum_{j=1}^N \| \nabla_{q} \partial_{p_{yj}}\|^2.
\]
Since the Gaussian measure satisfies a Poincar\'e inequality, the inequalities~\eqref{eq:condition} hold provided
\[
\forall i = 1,\dots,N, \qquad \int_{\R^N} \partial_{q_{yi}} u(q,p) \exp\left(-\beta\frac{p_{y}^2}{2}\right) \, dp_{y} = 0.
\]
Defining the closed subspace of $L^2(\psi_0) \cap \{ 1 \}^\perp$ 
\begin{equation}
\label{eq:def_H0}
\mathcal{H}_0 = \left\{ v \in H^1(\psi_0) \ \left| \ \overline{v}(q,p_x) = \left(\frac{2\pi}{\beta}\right)^{-N/2} \int_{\R^N} v(q,p) \, \exp\left(-\beta\frac{p_{y}^2}{2}\right) \, dp_{y} = 0 \right.\right\} \subset \mathcal{H},
\end{equation}
we conclude that, for any $u \in \mathcal{H}_0 \cap H^2(\psi_0)$,
\begin{equation}
  \label{eq:estimee_coercivite}
  C \left\| u \right\|_{H^1(\psi_0)}^2 
  \leq -\left\langle\left\langle u, \A_0(\gamma_y)  
  u \right\rangle\right\rangle.
\end{equation}
In particular, there exists a constant $K>0$ such that, for any $\gamma_y \geq \gamma_x$ and for any $u \in \mathcal{H}_0 \cap H^2(\psi_0)$, 
\[
\left\| \A_0(\gamma_y)^{-1}  u \right\|_{H^1(\psi_0)} \leq K \| u \|_{H^1(\psi_0)}.
\]
In fact, this inequality can be extended to functions in $\mathcal{H}_0$.

\paragraph{Proof of the limit~\eqref{eq:cv_Poisson_gamma_y}}
To prove~\eqref{eq:cv_Poisson_gamma_y}, we proceed as follows. Note first that
\[
-\A_0(\gamma_y) \left(f_{\gamma_{y}}- f^0-\gamma_{y}^{-1}f^1\right) = \frac{1}{\gamma_y} T_0 f^1,
\]
so that
\begin{equation}
\label{eq:reg}
f_{\gamma_{y}}- f^0-\gamma_{y}^{-1}f^1 = -\frac{1}{\gamma_y} \A_0(\gamma_y)^{-1} T_0 f^1.
\end{equation}
Since $\A_0(\gamma_y)^{-1}$ is bounded on $\mathcal{H}_0$, uniformly in $\gamma_y$ (see~\eqref{eq:estimee_coercivite}), it is sufficient to show that $T_0f^1 \in \mathcal{H}_0$. In view of the definition of $f^0$, the proof is then concluded by setting $\phi_i(q,p) = -\mathcal{T}_{q_y}^{-1}(p_{xi})$.

Let us first show that $\overline{T_0 f^1}(q,p_x) = 0$ (where $\overline{v}$ is defined in~\eqref{eq:def_H0}). This can be ensured by an appropriate choice of $\widetilde{f}^1$. Note first that
\[
T_0 f^1 = p_y \cdot \nabla_{q_y} \widetilde{f}^1 + \mathcal{T}_{q_y} f^1 + \left(p_y\cdot \nabla_{q_y} - \nabla_{q_y} V \cdot \nabla_{p_y}\right)f^1+ \mathcal{T}_{q_y} \widetilde{f}^1. 
\]
The first two terms have a vanishing average with respect to $(2\pi)^{-N/2} \exp\left(-\beta\frac{p_{y}^2}{2}\right) \, dp_y$. Introducing
\[
g(q,p_x) = -(2\pi)^{-N/2} \int_{\R^N} \left(p_y\cdot \nabla_{q_y} - \nabla_{q_y} V \cdot \nabla_{p_y}\right)f^1 \, \exp\left(-\beta\frac{p_{y}^2}{2}\right) \, dp_{y},
\]
the condition $\overline{T_0 f^1} = 0$ is satisfied provided
\[
\mathcal{T}_{q_y} \widetilde{f}^1 = g(q,p_x),
\]
Seeing the function on the right-hand side as a function of $(q_x,p_x)$ indexed by $q_y$ allows to define $\widetilde{f}^1$ pointwise in $q_y$ as 
\begin{equation}
\label{eq:def_f1_tilde}
\widetilde{f}^1 = -(2\pi)^{-N/2}\mathcal{T}_{q_y}^{-1} g.
\end{equation}

Let us now study the regularity of $T_0f^1$. We only treat the term $T_0(f^1 - \widetilde{f}^1)$ since the regularity of $T_0 \widetilde{f}^1$ can be proved similarly. Recall that all the functions under consideration are $C^\infty$ by hypoellipticity. Therefore, only the derivates in the $p$ variables have to be considered because the position space is compact. Now, 
\begin{equation}
\label{eq:expression_f1}
\begin{aligned}
f^1 - \widetilde{f}^1= & -\sum_{i} p_{yi} G'(q_{yi})\mathcal{T}_{q_y}^{-1}(p_{xi}) \\
& -\sum_{i,j,k} p_{yi} G(q_{yj})
\left\{ \left(\mathcal{T}_{q_y}^{-1}\right) \left[
    \partial^2_{q_{yi},q_{xk}} V(q_x,q_y) \partial_{p_{xk}}
    \right]\left(\mathcal{T}_{q_y}^{-1}\right) p_{xj} \right\}.
\end{aligned}
\end{equation}
The $p_y$ dependence is trivial in the above expression, so that only derivatives in~$p_x$ require some attention. Since $T_0 = \A_{y,{\rm ham}} + \mathcal{T}_{q_y}$ where $\A_{y,{\rm ham}} = p_y \cdot \nabla_{q_y} - \nabla_{q_y} V(q_x,q_y) \cdot \nabla_{p_y}$ is an operator in the $q_y,p_y$ variables (parameterized by~$q_x$), it suffices to consider $\mathcal{T}_{q_y} f^1$. This function is, in turn, a linear combination of terms of the form $p_{yi} p_{xi}$ (cf. the first term in the right-hand side of~\eqref{eq:expression_f1}) and $p_{yi} \partial_{p_{xk}}\mathcal{T}_{q_y}^{-1}p_{xj}$ (second term in the right-hand side of~\eqref{eq:expression_f1}). To prove that the latter functions are in~$H^1(\psi_0)$, we use the results of~\cite{Talay02,HN04}, which show that $\mathcal{T}_{q_y}^{-1}$ is a bounded operator on the Hilbert spaces
\[
\left\{ f \in H^m(\Psi_{q_y}) \ \left| \ \int_{(L_x\mathbb{T})^N \times \mathbb{R}^N} f(q_x,p_x) \, \Psi_{q_y}(q_x,p_x) \, dq_x\, dp_x \right. \right \} \subset L^2(\Psi_{q_y})
\]
for any $m \geq 0$, with a bound uniform in~$q_y$.

\subsection{Proof of Theorem~\ref{prop:gamma_x}}
\label{sec:proof_gamma_x}

The proof follows the same lines as the proof presented in 
Section~\ref{sec:proof_gamma_y}, so we skip the parts of the argument which can
be straightforwardly extended from there.

As in the previous section, we set $m = 1$ to simplify the notation, but the proof can 
be straightforwardly modified to account for more general masses.
Note first that the solution of~\eqref{eq:eq_Poisson_x} is well defined
for any $\gamma_x > 0$, for reasons similar to the ones exposed
at the beginning of Section~\ref{sec:proof_gamma_y}. Define
\[
\mathcal{T}_{q_x} = p_y \cdot \nabla_{q_y} - \nabla_{q_y}V(q_x,q_y) \cdot \nabla_{p_y} 
+ \gamma_y \A_{y, \rm thm},
\]
which is an operator parameterized by $q_x \in (L_x \mathbb{T})^N$, 
and acting on the Hilbert space $L^2(\Psi_{q_x})$, where
\[
\Psi_{q_x}(q_y,p_y) = Z_{q_x}^{-1} \exp\left(-\beta \left( V(q_x,q_y) + \frac{p_y^2}{2m}\right) 
\right).
\]
Its kernel is $\mathrm{Vect}(1) = \{ c \Psi_{q_x}, \ c\in \R\}$.

\paragraph{Formal asymptotic expansion in $\gamma_x$}
We start by formal computations,
with a discussion parallel to the corresponding one in Section~\ref{sec:proof_gamma_y}.
We consider the following \emph{ansatz} for the solution $f_{\gamma_x}$:
\[
f_{\gamma_x} = f^0 + \frac{1}{\gamma_x} f^1 + \frac{1}{\gamma_x^2} f^2 + \dots
\]
and rewrite the operator $\A_0(\gamma_x)$ as the sum $\A_0(\gamma_x) = T_0 + \gamma_x \A_{x, \rm thm}$.
The kernel of the operator $\A_{x, \rm thm}$ on $L^2(\psi_0)$ is
\[
\ker(\A_{x, \rm thm}) = \Big\{ g \in L^2(\psi_0) \ \Big| \ g = g(q,p_y) \Big\}.
\]
Identifying terms with the same powers of $\gamma_x$ 
in~\eqref{eq:eq_Poisson_x}, the following hierarchy is obtained:
\begin{equation}
  \label{eq:hierarchy_x}
  \left \{ \begin{aligned}
    \A_{x, \rm thm}f^0 & = 0, \\
    T_0 f^0 + \A_{x, \rm thm} f^1 & = -\sum_{i=1}^N p_{xi} G(q_{yi}), \\
    T_0 f^1 + \A_{x, \rm thm} f^2 & = 0.
  \end{aligned} \right.
\end{equation}
The first equation shows that $f^0 \equiv f^0(q,p_y)$. The second one can be rewritten
as 
\[
\A_{x, \rm thm} f^1(q,p) = -\sum_{i=1}^N p_{xi}\Big(G(q_{yi})+\partial_{q_{xi}}f^0(q,p_y)\Big)
- \mathcal{T}_{q_x}f^0(q,p_y),
\]
so that 
\[
f^1 = \sum_{i=1}^N p_{xi}\Big(G(q_{yi})+\partial_{q_{xi}}f^0(q,p_y)\Big) + \widetilde{f}^1,
\]
with
\[
\A_{x, \rm thm} \widetilde{f}^1(q,p) = - \mathcal{T}_{q_x}f^0(q,p_y).
\]
The solvability condition requires $\mathcal{T}_{q_x}f^0(q,p_y) = 0$,
hence $f^0 \equiv f^0(q_x)$. Besides, $\widetilde{f}^1$ does not depend on~$p_x$.
The solvability condition for the 
third equation in~\eqref{eq:hierarchy_x} is $T_0 f^1 \in \ker(\A_{x, \rm thm})$.
Now,
\[
\begin{aligned}
T_0\left(\sum_{i=1}^N p_{xi}\Big(G(q_{yi})+\partial_{q_{xi}}f^0(q_x)\Big)\right)
& = -\sum_{i=1}^N \partial_{q_{xi}}V(q_x,q_y) \Big(G(q_{yi})+\partial_{q_{xi}}f^0(q_x)\Big) \\
& \ \ + \sum_{i=1}^N p_{xi}p_{yi} G'(q_{yi}) + \sum_{i=1}^N p_{xi}^2 \partial^2_{q_{xi}}f^0(q_x),
\end{aligned}
\]
and
\[
T_0 \widetilde{f}^1 = \mathcal{T}_{q_x}\widetilde{f}^1 + 
\sum_{i=1}^N p_{xi} \partial_{q_{xi}}\widetilde{f}^1.
\]
We therefore set 
\[
f^0 = 0, 
\qquad
\widetilde{f}^1 = \mathcal{T}_{q_x}^{-1}\left(\sum_{i=1}^N \partial_{q_{xi}}V(q_x,q_y) 
G(q_{yi})\right),
\]
and 
\[
f^2(q,p) = \sum_{i=1}^N p_{xi}\Big( p_{yi}G'(q_{yi}) + 
\partial_{q_{xi}}\widetilde{f}^1\Big) + \widetilde{f}^2,
\]
so that 
\[
\A_{x, \rm thm} f^2(q,p) = -\sum_{i=1}^N p_{xi}\Big( p_{yi}G'(q_{yi}) + 
\partial_{q_{xi}}\widetilde{f}^1\Big) = -T_0 f^1.
\]
The function $\widetilde{f}^2$ is chosen such that $T_0 f^2$ has a vanishing average with respect to the Gaussian measure in the $p_x$ variable. 

Note that $\widetilde{f}^1$ is well defined since $\sum_{i=1}^N \partial_{q_{xi}}V(q_x,q_y) 
G(q_{yi}) \in \ker(\mathcal{T}_{q_x})^\perp = \{ 1 \}^\perp$ (where the orthogonality
is with respect to the scalar product on $L^2(\Psi_{q_x})$). Indeed,
\[
\begin{aligned}
\sum_{i=1}^N \partial_{q_{xi}}V(q_x,q_y) G(q_{yi}) 
& = \sum_{i=1}^N \sum_{j \neq i} \mathcal{V}'(|q_i-q_j|) \frac{q_{xi}-q_{xj}}{|q_i-q_j|} G(q_{yi}) \\
& = \sum_{1 \leq i < j \leq N} \mathcal{V}'(|q_i-q_j|) \frac{q_{xi}-q_{xj}}{|q_i-q_j|} \Big ( 
G(q_{yi}) - G(q_{yj}) \Big).
\end{aligned}
\]
The last line is antisymmetric with respect to the exchange of coordinates $q_{yi}$ and
$q_{yj}$, hence the average of the corresponding function with respect 
to $\Psi_{q_x}$, which is symmetric with respect to the exchange of coordinates $q_{yi}$ and
$q_{yj}$, vanishes.

A discussion similar to the one in Section~\ref{sec:proof_gamma_y} show also that 
$\partial_{q_{xi}}\widetilde{f}^1$ is well defined, hence the definition of $f^2$
makes sense.

\paragraph{Proof of the limit~\eqref{eq:cv_Poisson_gamma_x}}
The remainder of the proof follows the very same lines as the proof presented in 
Section~\ref{sec:proof_gamma_y}, hence we omit it.


\section*{Acknowledgments} 
This work was supported by ANDRA (French National Nuclear Waste Agency) 
and by the French Ministry of 
Research through the grant ANR-09-BLAN-0216-01 (MEGAS).
The authors wish to thank Claude Le Bris, Tony Leli\`evre and 
Greg Pavliotis for fruitful discussions.


\begin{thebibliography}{10}

\bibitem{Balian}
{\sc R.~Balian}, {\em From Microphysics to Macrophysics. Methods and
  Applications of Statistical Physics}, vol.~I - II, Springer, 2007.

\bibitem{Ciccotti}
{\sc G.~Ciccotti C.~Trozzi}, {\em Stationary nonequilibrium states by molecular
  dynamics. {II.} {N}ewton's law}, Phys. Rev. A, 29 (1984), pp.~916--925.

\bibitem{stoltz}
{\sc E.~Canc\`es, F.~Legoll, and G.~Stoltz}, {\em Theoretical and numerical
  comparison of sampling methods for molecular dynamics}, Math. Model. Numer.
  Anal., 41 (2007), pp.~351--390.

\bibitem{EH02}
{\sc J.-P. Eckmann and M.~Hairer}, {\em Spectral properties of hypoelliptic
  operators}, Commun. Math. Phys., 235 (2003), pp.~233--253.

\bibitem{EPR99}
{\sc J.P. Eckmann, C.-A. Pillet, and L.~Rey-Bellet}, {\em Non-equilibrium
  statistical mechanics of anharmonic chains coupled to two heat baths at
  different temperatures}, Commun. Math. Phys., 201 (1999), pp.~657--697.

\bibitem{evans-moriss}
{\sc D.~J. Evans and G.~P. Morriss}, {\em Statistical mechanics of
  nonequilibrium liquids}, Cambridge University Press, 2008.

\bibitem{gosling73}
{\sc E.~M. {Gosling}, I.~R. {McDonald}, and K.~{Singer}}, {\em On the
  calculation by molecular dynamics of the shear viscosity of a simple fluid},
  Mol. Phys., 26 (1973), pp.~1475--1484.

\bibitem{Green}
{\sc M.~S. Green}, {\em Markoff random processes and the statistical mechanics
  of time-dependent phenomena. {II.} {I}rreversible processes in fluids}, J.
  Chem. Phys., 22 (1954), pp.~398--413.

\bibitem{HairerLubichWanner06}
{\sc E.~Hairer, C.~Lubich, and G.~Wanner}, {\em Geometric Numerical
  Integration: Structure-Preserving Algorithms for Ordinary Differential
  Equations}, vol.~31 of Springer Series in Computational Mathematics,
  Springer-Verlag, 2006.

\bibitem{HairerPavliotis04}
{\sc M.~Hairer and G.~Pavliotis}, {\em Periodic homogenization for hypoelliptic
  diffusions}, J. Stat. Phys., 117 (2004), pp.~261--279.

\bibitem{HairerPavliotis08}
\leavevmode\vrule height 2pt depth -1.6pt width 23pt, {\em From ballistic to
  diffusive behavior in periodic potentials}, J. Stat. Phys., 131 (2008),
  pp.~175--202.

\bibitem{todd-hansen2007}
{\sc J.~S. Hansen, P.~J. Daivis, K.~P. Travis, and B.~D. Todd}, {\em
  Parameterization of the nonlocal viscosity kernel for an atomic fluid}, Phys.
  Rev. E, 76 (2007), p.~041121.

\bibitem{HN04}
{\sc F.~H\'erau and F.~Nier}, {\em Isotropic hypoellipticity and trend to
  equilibrium for the {F}okker-{P}lanck equation with a high-degree potential},
  Arch. Ration. Mech. Anal., 171 (2004), pp.~151--218.

\bibitem{Hoover85}
{\sc W.~G. Hoover}, {\em Canonical dynamics - {E}quilibrium phase-space
  distributions}, Phys. Rev. A, 31 (1985), pp.~1695--1697.

\bibitem{HooverAshurst}
{\sc W.~G. Hoover and W.~T. Ashurst}, {\em Nonequilibrium molecular dyanmics},
  Adv. Theor. Chem., 1 (1975), pp.~1--51.

\bibitem{HRW05}
{\sc R.~J. Hulse, R.L. Howley, and W.V. Wilding}, {\em Transient nonequilibrium
  molecular dynamic simulations of thermal conductivity. {I}. {S}imple fluids},
  Int. J. Thermophys., 26 (2005), pp.~1--12.

\bibitem{irving-kirkwood}
{\sc J.~H. {Irving} and J.~G. {Kirkwood}}, {\em {The Statistical Mechanical
  Theory of Transport Processes. IV. The Equations of Hydrodynamics}}, J. Chem.
  Phys., 18 (1950), pp.~817--829.

\bibitem{Remi}
{\sc R.~Joubaud}, \emph{Mathematical and numerical modelling of fluids at nanometric scales}, PhD thesis, Paris-Est University, 2012.

\bibitem{dpd2}
{\sc C.~Junghans, M.~Praprotnik, and K.~Kremer}, {\em {Transport properties
  controlled by a thermostat: An extended dissipative particle dynamics
  thermostat}}, {Soft Matter}, {4} ({2008}), pp.~{156--161}.

\bibitem{Kubo}
{\sc R.~Kubo}, {\em Statistical-mechanical theory of irreversible processes.
  {I.} {G}eneral theory and simple applications to magnetic and conduction
  problems}, J. Phys. Soc. Jpn., 12 (1957), pp.~570--586.

\bibitem{LLM07}
{\sc F.~Legoll, M.~Luskin, and R.~Moeckel}, {\em Non-ergodicity of the
  {N}os\'e-{H}oover thermostatted harmonic oscillator}, Arch. Ration. Mech.
  Anal., 184 (2007), pp.~449--463.

\bibitem{LLM09}
\leavevmode\vrule height 2pt depth -1.6pt width 23pt, {\em Non-ergodicity of
  {N}os\'e-{H}oover dynamics}, Nonlinearity, 22 (2009), pp.~1673--1694.

\bibitem{LRS10}
{\sc T.~Leli\`evre, M.~Rousset, and G.~Stoltz}, {\em Langevin dynamics with
  constraints and computation of free energy differences}, arXiv preprint,
  1006.4914 (2010).

\bibitem{Nose84}
{\sc S.~Nos\'e}, {\em A unified formulation of the constant temperature
  molecular-dynamics methods}, J. Chem. Phys., 81 (1984), pp.~511--519.

\bibitem{OttobrePhD}
{\sc M.~Ottobre}, {\em Asymptotic Analysis for Markovian models in non-equilibrium Statistical Mechanics}, PhD thesis, Imperical College, London, 2012.

\bibitem{ottobre-pavliotis}
{\sc M.~Ottobre and G.A Pavliotis}, {\em {Asymptotic analysis for the
  generalized Langevin equation}}, Nonlinearity, 24 (2010), pp.~1629--1653.

\bibitem{pavliotis-stuart}
{\sc G.A Pavliotis and A.M Stuart}, {\em {Periodic homogenization for inertial
  particles}}, {Physica D}, {204} ({2005}), pp.~{161--187}.

\bibitem{rey-bellet}
{\sc L.~Rey-Bellet}, {\em Ergodic properties of markov processes}, in Open
  Quantum Systems II, S.~Attal, A.~Joye, and C.-A. Pillet, eds., vol.~1881 of
  Lecture Notes in Mathematics, Springer, 2006, pp.~1--39.

\bibitem{SY06}
{\sc T.~Shardlow and Y.~B. Yan}, {\em Geometric ergodicity for dissipative
  particle dynamics}, Stoch. Dynam., 6 (2006), pp.~123--154.

\bibitem{dpd1}
{\sc T.~Soddemann, B.~Dunweg, and K.~Kremer}, {\em {Dissipative particle
  dynamics: A useful thermostat for equilibrium and nonequilibrium molecular
  dynamics simulations}}, {Phys. Rev. E}, 68 (2003).

\bibitem{stoltzPhD}
{\sc G.~Stoltz}, {\em Some mathematical methods for molecular and multiscale
  simulation}, PhD thesis, Ecole des Ponts, 2007.

\bibitem{Talay02}
{\sc D.~Talay}, {\em Stochastic {H}amiltonian dissipative systems: exponential
  convergence to the invariant measure, and discretization by the implicit
  {E}uler scheme}, Markov Proc. Rel. Fields, 8 (2002), pp.~163--198.

\bibitem{todd-daivis2007}
{\sc B.~D. Todd and P.~J. Daivis}, {\em Homogeneous non-quilibrium molecular
  dynamics simulations of viscous flow: {T}echniques and applications}, Mol.
  Sim., 33 (2007), pp.~189--229.

\bibitem{todd-pressure}
{\sc B.~D. Todd, Denis~J. Evans, and Peter~J. Daivis}, {\em Pressure tensor for
  inhomogeneous fluids}, Phys. Rev. E, 52 (1995), pp.~1627--1638.

\bibitem{Verlet}
{\sc L.~Verlet}, {\em Computer ``experiments'' on classical fluids. {I}.
  {T}hermodynamical properties of {L}ennard-{J}ones molecules}, Phys. Rev., 159
  (1967), pp.~98--103.

\bibitem{Villani}
{\sc C.~Villani}, {\em Hypocoercivity}, Mem. Amer. Math. Soc., 202 (2009).

\end{thebibliography}
\end{document}